\def\green#1{\textcolor{green}{#1}}
\def\green#1{\textcolor[rgb]{0,0.6,0}{#1}}
\def\ignore#1{\iffalse #1 \fi}
\def\ignore#1{}
\def\memo#1{\iffalse #1 \fi}
\def\To{\shortrightarrow}
\newcommand\erase{\bgroup\markoverwith{\textcolor{red}{\rule[.5ex]{2pt}{0.4pt}}}\ULon}
\def\subst#1#2{\green{#1}}
\def\remove#1{\subst{}{#1}}
\def\Next{{\it next}}
\def\current{{\it c}}
\def\facilitator{{\it Fcl}}
\def\dst{{\it dst}}
\def\move{{\mathit{move}}}
\theoremstyle{plain}
\newtheorem{proposition}{Proposition}
\title[Distributed Planning with Asynchronous Execution]{Distributed
  Planning with Asynchronous Execution with Local Navigation for Multi-agent Pickup
  and Delivery Problem}
\author{Yuki~Miyashita}
\affiliation{
  \institution{Shimizu Corporation}
  \city{Tokyo}
  \country{Japan}}
\email{y-miyashita@shimz.co.jp}
\author{Tomoki~Yamauchi}
\affiliation{
  \institution{Waseda University}
  \city{Tokyo}
  \country{Japan}}
\email{t.yamauchi@isl.cs.waseda.ac.jp}
\author{Toshiharu~Sugawara}
\affiliation{
  \institution{Waseda University}
  \city{Tokyo}
  \country{Japan}}
\email{sugawara@isl.cs.waseda.ac.jp}
\begin{abstract}
 We propose a distributed planning method with asynchronous execution
 for {\em multi-agent pickup and delivery} (MAPD) problems for
 environments with occasional delays in agents' activities and
 flexible endpoints. MAPD is a crucial problem framework with many
 applications; however, most existing studies assume ideal agent
 behaviors and environments, such as a fixed speed of agents,
 synchronized movements, and a well-designed environment with many
 short detours for multiple agents to perform tasks easily. However,
 such an environment is often infeasible; for example, the moving
 speed of agents may be affected by weather and floor conditions and
 is often prone to delays. The proposed method can relax some
 infeasible conditions to apply MAPD in more realistic environments by
 allowing fluctuated speed in agents' actions and flexible working
 locations (endpoints). Our experiments showed that our method enables agents to
 perform MAPD in such an environment efficiently, compared to the
 baseline methods. We also analyzed the behaviors of agents using our
 method and discuss the limitations. 
\end{abstract}
\keywords{Multi-agent path planning, Multi-agent pickup and delivery problem, Distributed robotics planning}
\newcommand{\BibTeX}{\rm B\kern-.05em{\sc i\kern-.025em b}\kern-.08em\TeX}
\begin{document}

%%% The following commands remove the headers in your paper. For final 
%%% papers, these will be inserted during the pagination process.

\pagestyle{fancy}
\fancyhead{}

%%% The next command prints the information defined in the preamble.

\maketitle 

%%%%%%%%%%%%%%%%%%%%%%%%%%%%%%%%%%%%%%%%%%%%%%%%%%%%%%%%%%%%%%%%%%%%%%%%

\section{Introduction}
{\em Multi-agent path-finding} (MAPF), in which multiple agents move
to their destinations by avoiding collisions, is an important abstract
problem that arises in many applications, such as robotics and
games~\cite{salzman2020research,Jennings1997,Michal2006}. For example,
in an automated warehouse, the robots (carrier agents) move to the
pickup locations, load materials, and deliver them to their respective
unloading locations, by repeatedly assigning new tasks to agents
consecutively. This type of application in which MAPF problems are
iteratively solved is formulated as a {\em multi-agent
  pickup-and-delivery} (MAPD) problem~\cite{MaAAMAS2017}. Therefore,
the objective of MAPD is that agents repeatedly move to their
respective {\em task endpoints}, which are pickup and delivery
locations~\cite{MaAAMAS2017}, without collisions. 
\par

However, complex and restrictive environments may reduce efficiency
and hence increase the chances of collisions. Furthermore,
uncertainties often exist in practice. For example, The
  occasional delays in the movements
of some agents will affect the plans of others. Consequently, many agents cannot move according to
their plans as previously determined. In such a case, these agents
have to replan their paths to avoid collisions. However, replanning
requires considerable computational cost with an increasing number of
agents when using centralized methods or when assuming synchronized
movement with decentralized control. Meanwhile, we envisioned an
application for robotic material transfer in a construction site and a
disaster area and such uncertainty is not preventable. Thus, we
focused on a distributed 
planning-and-executing method wherein agents generate their plans
individually and move to their destinations interacting only with the
local agents involved to address the negative effects of delays and
resource conflicts. Although distributed planning for MAPD has some
issues to be addressed, such as completeness and avoidance of
live-/dead-lock, we think that a distributed planning method is
preferable because of their desirable characteristics, such as
robustness, scalability, and reconfigurability~\cite{robotYan2013}. 
\par

Although there are several studies on planning and execution in MAPF
and MAPD, most studies assumed grid-like environments, such as an
automated warehouse wherein (1) there are many endpoints more than
agents and (2) the agents move at a constant speed and move
synchronously~\cite{MaAAMAS2017}. Moreover, these conventional methods
require a {\em well-formed infrastructure conditions} (WFI
conditions)~\cite{Michal2015ICAPS}. For example, in the {\em holding
  task endpoint} (HTE)~\cite{MaAAMAS2017}, any path connecting a pair
of endpoints does not traverse any other endpoints. These types of
requirements can be easily satisfied in a simple grid-like
environment, but are not feasible in our target applications. For
example, at a construction site, heavy-duty robots transport heavy
building materials weighing 500 kg and above between storage
locations, specific work locations, and/or elevators that carry them
to other floors, during the night for the next day's work; the number
of these endpoints (i.e., storage and working locations) is not
large. Therefore, several agents are likely to move to a few specific
endpoints as their destinations simultaneously, but should not collide
with each other. Furthermore, (1) owing to a variety of working
locations, the WFI conditions cannot necessarily be met, and (2)
agents cannot often move at a constant speed because of various
reasons, such as temperature, humidity, wet floor, slopes/small steps,
and sensor errors.
\par

Our contribution is twofold. First, we introduced a problem of {\em MAPD with fluctuated
  movement speed} (MAPDFS), which is an abstraction of our target
application that involves material transport tasks by multiple robots
at a construction site, considering the characteristics discussed
above, such as a smaller number of endpoints\remove{ (it may be less than
agents)}, and uncertainty in agent's movement speed.
\par

Second, we presented a novel distributed planning
with an asynchronous execution method for MAPD problems. Its features
are low-cost planning without considering the plans of other agents
and the adjacent movements of multiple agents, based on possible
conflict detection in an environment described by the graph consisting
of several bi-connected components with small tree-structured
areas. We introduced into MAPD two types of agents: {\em carrier
  agents}, simply called agents, that carry materials while planning
paths with ignoring other agents, and {\em node agents} that manage
the corresponding nodes in the graph and detect the possibility of
conflicts by communicating only with the neighboring nodes. After a
carrier agent plans a path, it communicates with the node agent
managing the current node, and then the node agent requests the next
node agent to reserve the node. Depending on the response, the current
node agent suggests to the carrier agent that it can move to the next
node, should move to another node (i.e., taking a detour) or should
wait for a while at the current node. Then, it moves/waits
asynchronously without considering other agents. Moreover, when it
moves to another node that is not in the current plan, it replans
another path, ignoring other agents.
\par

The basic idea behind the ignorance of other agents is the fact that
detailed planning with rigorous execution will easily become
unavailable in a real environment that may have
uncertainty. Conversely, to reduce collisions, we introduced the
orientation to the environment such that its main area can be observed
as a {\em strongly connected graph}. Subsequently, the direct edges
are used to navigate the agents based on the directions and eliminate
head-on collisions and dead-/livelock, such as a roundabout. This may
force agents to take detours; however, it simplifies the independent
planning process, makes it easier to detect potential collisions,
allows agents to delay/stop moving, and relaxes the WFI conditions
with a small number of endpoints. These features are indispensable in
real-world applications. We then experimentally evaluated the
performance of the proposed method by comparing it with two baseline
methods, {\em rolling-horizon collision
  resolution} (RHCR)~\cite{Li2021AAAI} and HTE~ \cite{MaAAMAS2017}. We
found that our proposed 
method outperformed the baseline methods in the environment that meets
the conditions they require. We also showed that agents with our
method can complete all tasks without collisions, despite occasional
delays in agent speed and flexibility in locating the endpoints.

%%%%%%%%%%%%%%%%%%%%%%%%%%%%%%%%%%%%%%%%%%%%%%%%%%%%%%%%%%%%%%%%%%%%%%%%

\section{Related Work}
Although finding collision-free paths is a basic requirement for MAPF/MAPD, it is known as an NP-hard problem~\cite{JingjinAAAI2013}, which involves high computational costs. Therefore, several studies attempted to find sub-optional solutions by relaxing problems~\cite{barer2014suboptimal,BoyarskiIJCAI2015,FelnerICAPS2018}. 
\par

MAPF is often integrated into MAPD by iterating path-finding using
centralized
planners~\cite{Nguyen2017IJCAI,cohen2015feasibility,huang2021AAMAS,liu2019AAMAS}. For
example, Nguyen et al.~\cite{Nguyen2017IJCAI} solved task assignment
and path-finding problems using answer set programming. Cohen, Uras,
and Koenig~\cite{cohen2015feasibility} extended {\em conflict-based
  search} (CBS)~\cite{sharon2015conflict} by adding a set of edges
with the user-specified orientation, called {\em highways}, that
provide collision-avoidance guidance by eliminating agents travel in
the opposite direction. The {\em highways} also contribute to
decreasing the runtimes and solution costs of the MAPF solver for
autonomous robots in a warehouse. However, these centralized planners
do not scale because of high computational and communication costs
with an increasing number of agents. Our method can be considered a
combination of {\em highways} and the distributed planning and
execution because we also oriented edges (algorithmically) in
environments to locally 
navigate agents. 
\par

Other sub-optimal centralized planning methods~\cite{vsvancara2019AAAI,Li2021AAAI,WiktorIROS2014} decompose iterative MAPFs into sequential path-planning problems in which a solver replans all paths at every timestep. Because these methods have computationally high costs, some studies attempted to improve scalability. \v{S}vancara et al.~\cite{vsvancara2019AAAI} considered an online MAPF in which new agent goals may be added dynamically and disappear at their goals. They then proposed the online independence detection algorithm for generating plans to reduce disruptions of existing plans. Li et al.~\cite{Li2021AAAI} proposed RHCR in which agents find conflicts that occur in the $w$-timestep windows and replan paths every $h$ timesteps. However, if the environment is dense and crowded, many paths for other agents have to be checked to generate a collision-free path, resulting in high computational costs.
\par

The decentralized planning of collision-free paths has also been
studied~\cite{MaAAMAS2017,ma2019lifelong,WiktorIROS2014,Okumura2019IJCAI,Yamauchi2022AAMAS}. For
example, Ma et al.~\cite{MaAAMAS2017} proposed HTE, which is a
prioritized path planning method wherein agents plan their paths
consecutively and the new path is required \remove{not} to be
collision-free with existing paths of other agents. The {\em priority
  inheritance with back-tracking} (PIBT)~\cite{Okumura2019IJCAI}
focuses on the adjacent movements of multiple agents based on
prioritized planning in a short time window. These methods require
environmental conditions to guarantee the completeness of MAPD
instances. HTE requires the WFI condition and PIBT requires that the
environment is a bi-connected graph. Although these environmental
conditions apply to some applications (e.g., MAPD in an automated
warehouse), the environments of other applications (e.g., construction
sites and some logistics sorting
centers~\cite{Wan2018ICCARV}) cannot
satisfy these conditions. Furthermore, agents are forced to
synchronize movements in decentralized approaches such as these and
thus cannot handle fluctuations in moving speed.
\par

Meanwhile, we assumed that distributed approaches are more suitable in our applications because agents can move asynchronously and improve scalability and robustness. However, the methods from this approach are sometimes incomplete. For example, Wilt and Botea~\cite{Wilt2014ICAPS} proposed a spatially distributed planner in which each controller agent manages a subarea and communicates with the adjacent controller confirming the transfer of a mobile unit to another subarea. However, this method is partially centralized to make it complete. Thus, its environmental condition is somewhat inflexible, and agents cannot move asynchronously. Miyashita, Yamauchi, and Sugawara~\cite{Miyashita2022COMPSAC} proposed a distributed planning that enables agents to move asynchronously. Their method introduces two types of agents such as ours, {\em carrier agents}, which carry materials, and {\em node agents}, which manage the resource conflicts. However, their method is significantly complicated to implement. Our method takes a similar approach but is much simpler and low-cost, and agents can move effectively even in a crowded area.

%%%%%%%%%%%%%%%%%%%%%%%%%%%%%%%%%%%%%%%%%%%%%%%%%%%%%%%%%%%%%%%%%%%%%%%%

\def\AgentSet{I}
\def\TaskSet{\mathcal{T}}
\def\task{\tau}

\def\park{{\it pk}}
\def\move{{\it mv}}
\def\loadunload{{\it lu}}
\def\main{{\it main}}
\def\marginal{{\it mar}}
\def\tree{{\it tree}}
\def\Root{{\it root}}
\def\abs#1{{\vert #1\vert}}

\section{Problem formulation}
\subsection{MAPD}
In the MAPD problem, {\em carrier agents}, or simply agents load
materials at specified locations, carry them to specified
destinations, and unload them. Let $\AgentSet =\{1, \cdots, n\}$ be
the set of agents and $\TaskSet =\{\task_1, \cdots,
  \task_{\abs{\TaskSet}}\}$ be 
a set of tasks that should be completed by the agents. An environment
is described by a {\em connected} graph $G=(V,E)$ that can be embedded
into a two-dimensional space, where $V = \{v_1, \cdots, v_{\abs{V}}\}$ is
a set of nodes corresponding to locations and $E$ is a set of edges
corresponding to connections between two neighboring nodes. Our graph
is constituted at a combination of indirect and direct edges, and
hence the edge connecting $v_j$ and $v_k$ is denoted as either
$(v_j,v_k) \in E$ for the indirect edge or $(v_j \To v_k) \in E$ for
the direct edge. Therefore, $(v_j \To v_k)$ indicates the path through
which agents can move only from $v_j$ to $v_k$, while agents can move
to both directions between $v_j$ and $v_k$ on $(v_j,v_k)$. Note that
we assume that agent $i$ cannot stop on an edge\remove{ at the same
  time}. Two agents collide only when they exist on a single node or
traverse the same edge in opposite directions; this implies that a
collision does not occur on the direct edge. 
\par

Task $\task_k\in\TaskSet$ is specified by a tuple $\task_k=(v_k^l,
v_k^u, \mu_k)$, where $\mu_k$ is the material to carry, $v_k^l$ is the
pickup node where to load $\mu_k$, and $v_k^u$ is the destination node
where to unload $\mu_k$. The nodes where agents load and unload
materials are called {\em task endpoints}. We introduce discrete time
$t\geq 0$, whose unit is {\em timestep}. When agent $i\in\AgentSet$ is
allocated task $\task_k\in \TaskSet$ at time $t$, it individually
plans to generate a path from \remove{the current} node
$v_\current^i(t)$
  to $v_k^l$ to
load $\mu_k$ and a path from $v_k^l$ to $v_k^u$ to unload it, where
$v_\current^i(t)$ is the node on which $i$ is at
  $t$. Then,\remove{agent} $i$ moves to 
$v_k^l$ and $v_k^u$ along the paths and load/unload $\mu_k$. If a
collision is possible, $i$ has to modify the path to avoid it. 
\par

Initially ($t=0$), agent $i$ starts from its own parking node,
$v_i^\park$ ($=v_\current^i(0)$), and then begins to execute the given tasks
in $\TaskSet$ with other agents. An agent can move to one of
neighboring nodes in $T_\move$ timesteps if possible, but may need
longer time $T_\loadunload$ ($\geq T_\move$) for loading/unloading. It
can also wait at a node for any timesteps. We may assume
fluctuations in the agent's speed; that is, it occasionally takes
longer than $T_\move$ to move to a neighboring node. When $i$ has
completed the current task, one element in $\TaskSet$ is allocated to
it. When $\TaskSet=\varnothing$, $i$ returns to its parking
node. We define that an {\em endpoint} is a task endpoint or a parking
node.
\par

\subsection{Required Environmental Conditions}

We assume our environment $G=(V,E)$ consists of some bi-connected
components with small trees. We \remove{then} define it more
formally.
Let $G$ be a undirected graph and $G'=(V',E')$ be
a subgraph of $G$ associated with $V'\subset V$, that is,
$(v_1,v_2)\in E'$ iff $v_1, v_2\in V'$ and $(v_1, v_2)\in E$.
\begin{definition} (Bi-connected component)
  \begin{itemize}
    \item[(a)] $G'$ is a {\em bi-connected subgraph} of $G$ iff, for $\forall v_1, v_2\in V'$, $G'$ contains a cycle connected $v_1$ and $v_2$.
    \item[(b)] Bi-connected subgraph $G'$ of $G$ is a maximal if $\not\exists v\in V\setminus V'$ s.t. the subgraph of $G$ associated with $V'\cup\{v\}$ is the bi-connected subgraph.
    \item[(c)] A maximal bi-connected subgraph of $G$ is called a {\em bi-connected component}.
  \end{itemize}
\end{definition}
Let $G_1, \dots, G_K$ be all bi-connected components of $G$.
We denote $G_k=(V_k,E_k)$\remove{ and $K\geq 1$ is an
  integer}. We call $G_\main= G_1\cup 
\dots\cup G_K$ the {\em main area} of $G$, where we define the union
of graphs as $G_k\cup G_{k'}=(V_k\cup V_{k'},E_k\cup E_{k'})$. We
assume that our environment $G$ holds the following structural (SC)
and agent (AC) conditions: 
\begin{itemize}
  \item[SC1.] $G_\main$ is a connected graph.
  \item[SC2.] For $\forall v\in V\setminus V_\main$, $v$ is a node in
    one of $L$ \remove{($\geq 0$)} tree-structured subgraphs of $G$,
    $G_\tree^1, \dots, G_\tree^L$, and for $1\leq k\leq L$,
    $G_\tree^k\cap G_\main$ is a singleton whose element is the root
    node of $G_\tree^k$. 
  \item[SC3.] Parking nodes are end nodes in tree-structured subgraphs that have no task endpoints.
  \item[AC1.] Agents are at least two fewer than the nodes in the main area, $\abs{\AgentSet} \leq \abs{G_\main}-2$, but we suggest that agents be fewer than half of $\abs{G_\main}$ for efficiency. This topic will be discussed later. 
\end{itemize}
We denote the set of all root nodes as $G_\Root=(G_\tree^1\cup\dots\cup G_\tree^L)\cap G_\main$. By defining {\em marginal zone} as $G_\marginal=G_\tree^1\cup\dots\cup G_\tree^L\setminus G_\Root$, $G=G_\main\cup G_\marginal$ is evidently a disjoint union. Then, the following proposition is true.
\par

\begin{proposition}
 Suppose that $G_\main=G_1\cup \dots\cup G_K$ is connected graph and $G_k$ is the bi-connected component. If $V_k\cap V_{k'}\not=\varnothing$, the $V_k\cap V_{k'}$ is a singleton.
\end{proposition}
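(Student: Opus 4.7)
The plan is to argue by contradiction, exploiting the maximality clause in the definition of a bi-connected component; I read the proposition as implicitly quantifying over distinct indices $k\neq k'$. Assume for contradiction that $V_k\cap V_{k'}$ contains two distinct vertices $u$ and $v$. I would derive a contradiction by showing that the union $H:=G_k\cup G_{k'}$ is itself a bi-connected subgraph of $G_\main$ with $V(H)\supsetneq V_k$, which would violate the maximality of $G_k$ as a bi-connected component.

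The strict inclusion is immediate: if $V_{k'}\subseteq V_k$, then $G_{k'}$ would be a bi-connected subgraph properly contained in $G_k$ (proper because $k\neq k'$ forces $G_{k'}\neq G_k$), contradicting the maximality of $G_{k'}$. Hence $V_{k'}\setminus V_k\neq\varnothing$ and $V(H)\supsetneq V_k$.

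The bulk of the argument is the bi-connectedness of $H$. I would invoke the classical equivalence (Whitney's theorem) that a connected graph with at least three vertices is bi-connected, in the cycle-based sense used by the paper, if and only if it has no cut vertex, and verify the no-cut-vertex condition for $H$ directly. For any $w\in V(H)$, I need to show that $H-w$ is still connected, splitting into three cases. If $w\in V_k\setminus V_{k'}$, then $G_{k'}$ survives intact inside $H-w$, while $G_k-w$ is connected by the bi-connectedness of $G_k$; the two share the vertex $u$ (which lies in $V_{k'}$ and so differs from $w$), hence $H-w$ is connected. The subcase $w\in V_{k'}\setminus V_k$ is symmetric. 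If $w\in V_k\cap V_{k'}$, then at least one of $u,v$ differs from $w$, say $v$; both $G_k-w$ and $G_{k'}-w$ are connected and share $v$, so $H-w$ is again connected.

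The main obstacle will be justifying the transition from the paper's cycle-based definition of bi-connectedness to the cut-vertex characterization, since this equivalence is not quoted in the paper. If a self-contained argument is preferred, the equivalence can be recovered from the two-internally-disjoint-paths formulation of $2$-connectivity and used to splice a cycle through any pair $x,y\in V(H)$: apply the fan lemma inside $G_k$ to obtain internally disjoint paths from $x$ to $u$ and from $x$ to $v$, apply it symmetrically inside $G_{k'}$ to $y$, and concatenate the four arcs into a cycle $x\to u\to y\to v\to x$. A minor subtlety when $|V_k\cap V_{k'}|>2$ is that the four segments must remain pairwise internally disjoint, which is ensured automatically by the cut-vertex route and is the reason I would present the proof in that form.
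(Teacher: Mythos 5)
Your proposal is correct, but it reaches the contradiction by a different route than the paper. The paper works directly with its cycle-based definition: it fixes $v\in V_k$, $v'\in V_{k'}$ and the two shared nodes $w,w'$, produces two paths in $G_k$ from $v$ to $w$ and to $w'$ meeting only at $v$ (and symmetrically in $G_{k'}$ from $v'$), and splices them into a cycle through $v$ and $v'$; the bulk of the paper's appendix proof is the rerouting argument ("take the last intersection node and switch paths") needed to make those two paths internally disjoint. You instead pass to the cut-vertex characterization of $2$-connectivity and check by a clean three-case analysis that $G_k\cup G_{k'}$ has no cut vertex, which entirely sidesteps the disjointness bookkeeping that the paper has to do by hand --- this is exactly the subtlety you correctly identified at the end of your write-up, and your fallback sketch (fan lemma plus splicing a cycle $x\to u\to y\to v\to x$) is essentially the paper's argument. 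The trade-off is that your main route imports Whitney's equivalence between the no-cut-vertex condition and the existence of a cycle through any two vertices, which the paper does not quote and which you would need to state (or prove via Menger/the fan lemma) to make the argument self-contained under the paper's definition; the paper's construction, while fiddlier, uses nothing beyond the definition itself. Both proofs share the same slight informality in concluding "hence $G_k$ is not maximal" from the bi-connectedness of the union, given that the paper's Definition 1(b) phrases maximality as a single-vertex extension, so you are not losing any rigor relative to the original there.
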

\begin{proof}
 If $V_k\cap V_{k'}$ has two nodes, $V_k\cup V_{k'}$ is bi-connected; thus, $V_k$ and $V_{k'}$ are not bi-connected components, and this is a contradiction (the detailed proof is provided in Appendix).
\end{proof}

We can set a task endpoint to any nodes in $G$ subject to Condition
SC3 If an agent loads/unloads
a material on the nodes in the main area, the movement of other agents
may temporarily impeded. However, because working places are possible
anywhere on a construction site, we believe that it is inevitable and
agents should carry the required materials to such places. 
\par

\begin{figure}
  \begin{minipage}[b]{0.49\linewidth}
  \centering
  \includegraphics[width=0.85\hsize]{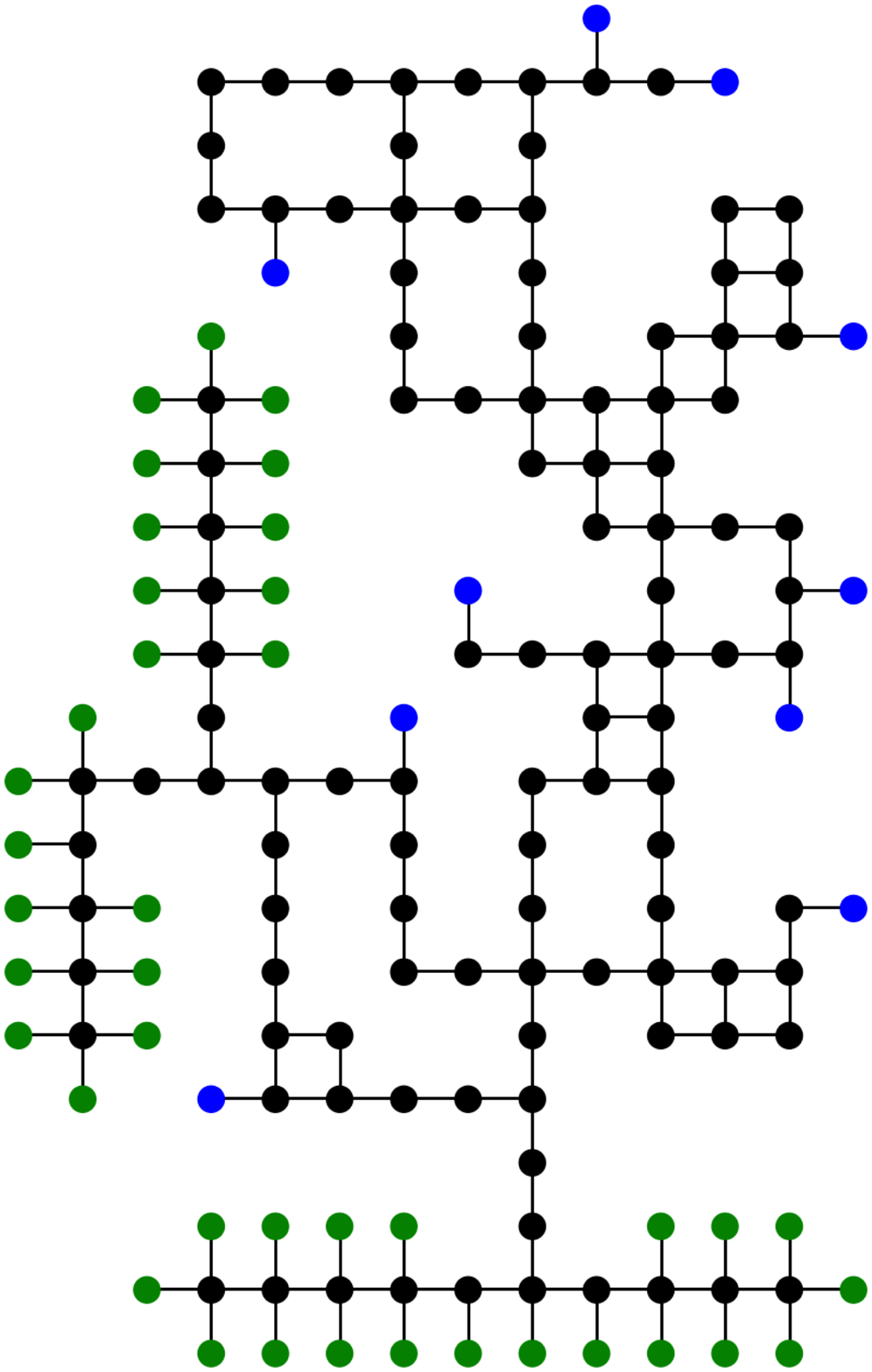}
  \subcaption{Environment 1 (Env.~1)}\label{fig:env1}
  \end{minipage}
    \begin{minipage}[b]{0.49\linewidth}
  \centering
  \includegraphics[width=0.85\hsize]{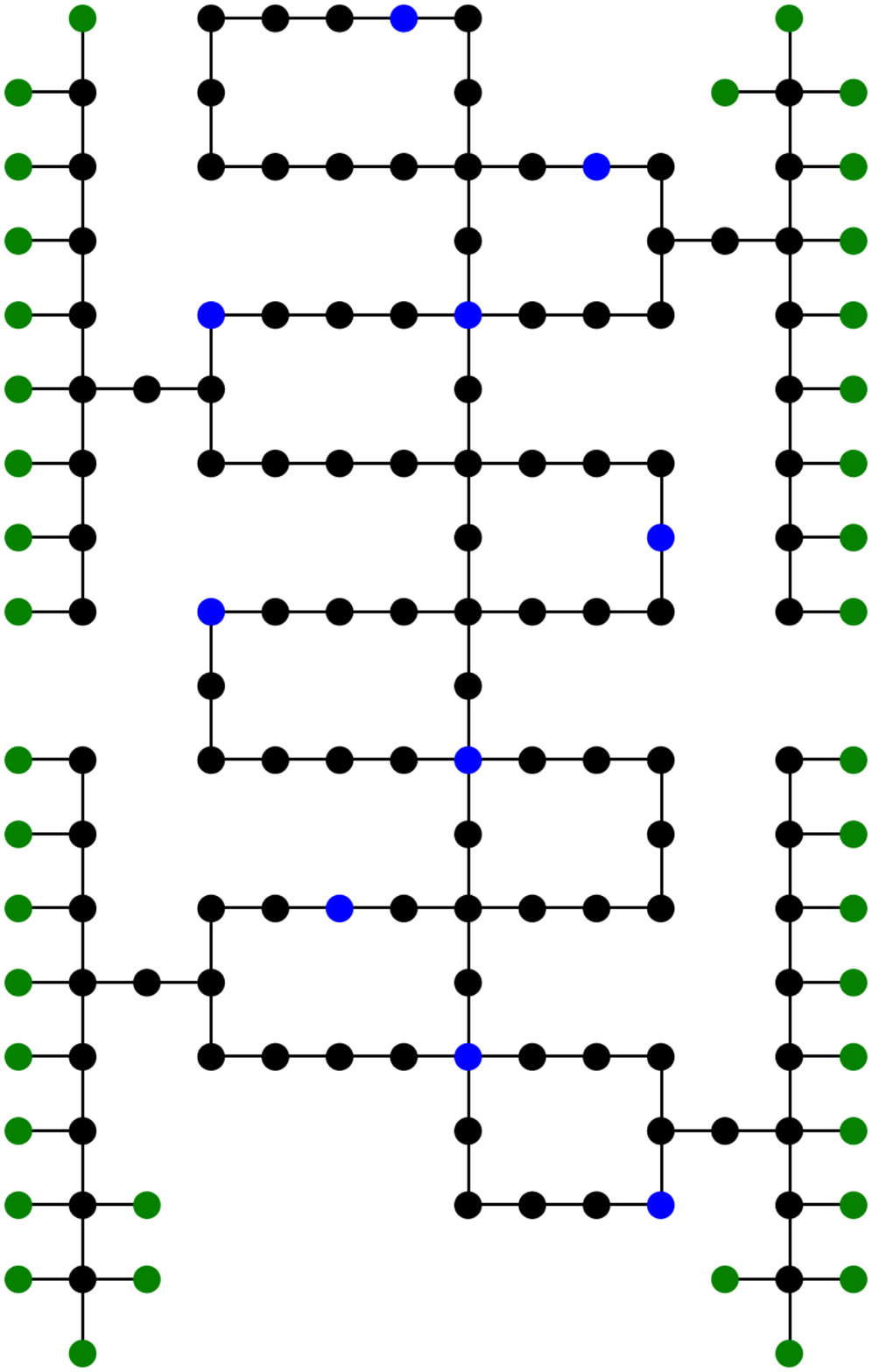}
  \subcaption{Environment 2 (Env.~2)}\label{fig:env2}
  \end{minipage}
   \caption{Example environments.}
    \label{fig:env}
\end{figure}

\begin{figure}
  \begin{minipage}{0.45\linewidth}
  \centering
  \includegraphics[width=1\hsize]{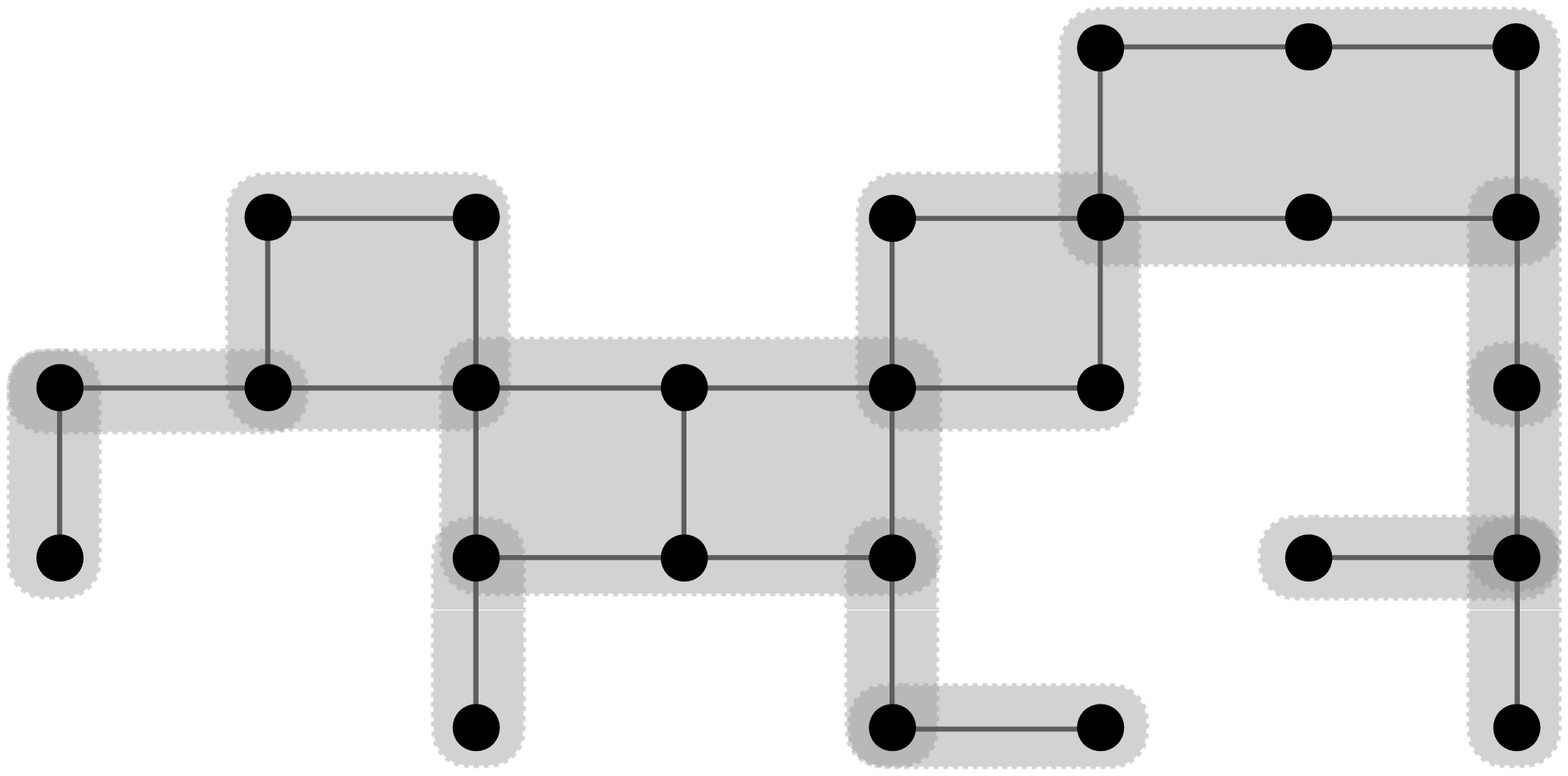}
  \subcaption{Graph structure.}
  \label{fig:bf_meet_graph}
  \end{minipage}
  \begin{minipage}{0.45\linewidth}
  \centering
  \includegraphics[width=1\hsize]{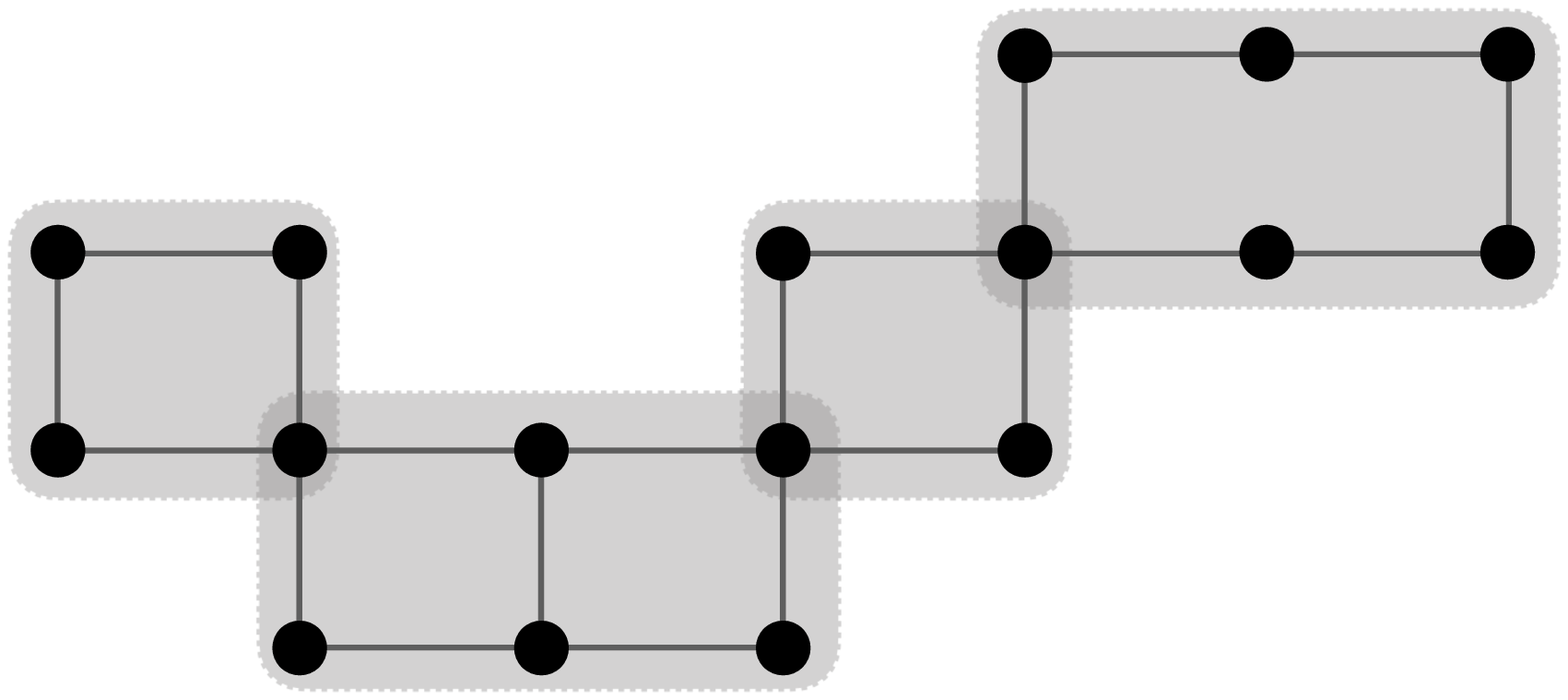}
  \subcaption{Main area.}
  \label{fig:af_meet_graph}
  \end{minipage}
  \caption{Example environment.}
  \label{fig:meet_graph}
\end{figure}

\begin{figure}
  \begin{minipage}[c]{0.45\linewidth}
  \centering
  \includegraphics[width=1\hsize]{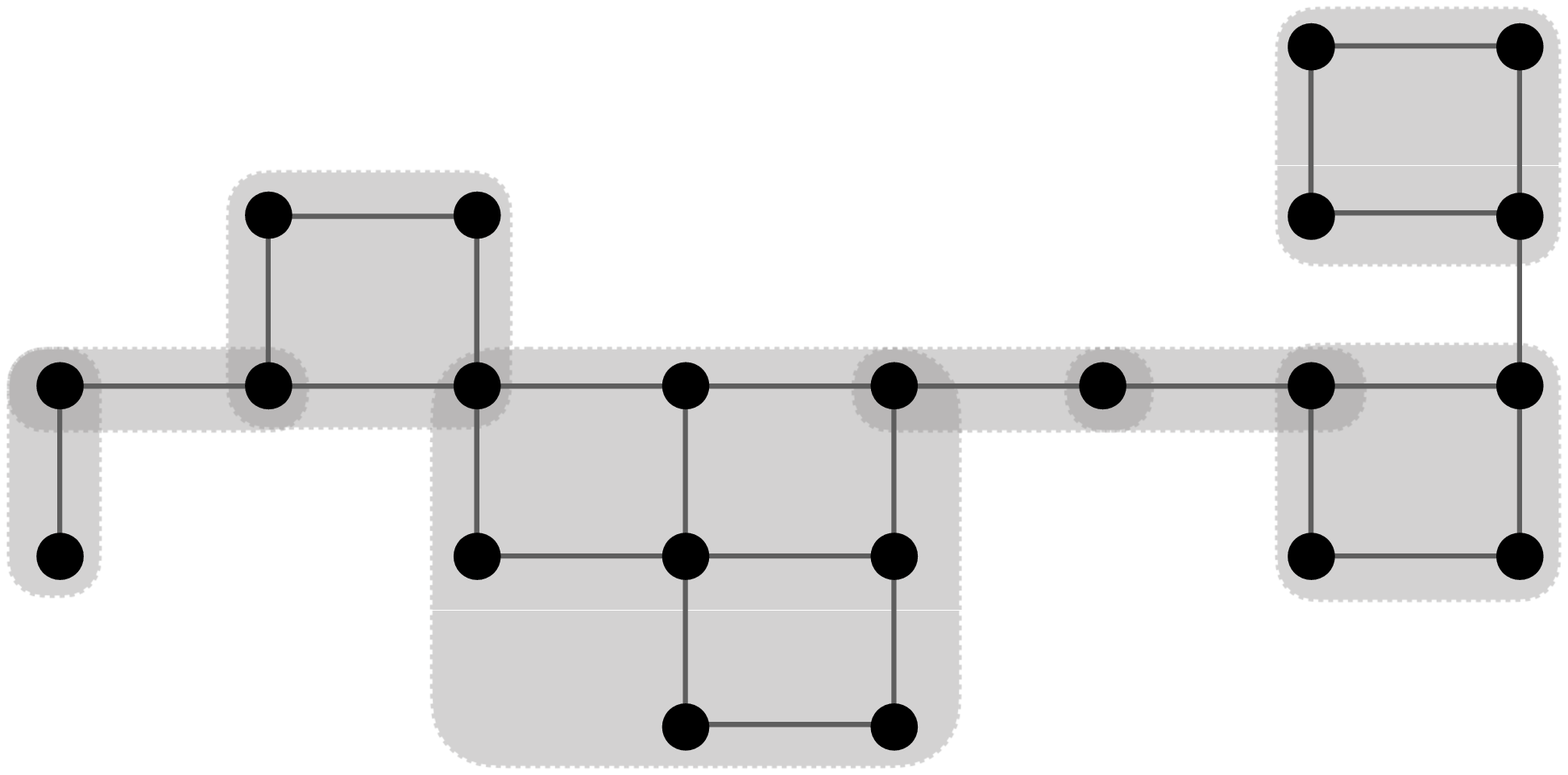}
  \subcaption{Graph}
  \label{fig::bf_not_meet_graph}
  \end{minipage}
  \begin{minipage}[c]{0.45\linewidth}
  \centering
  \includegraphics[width=1\hsize]{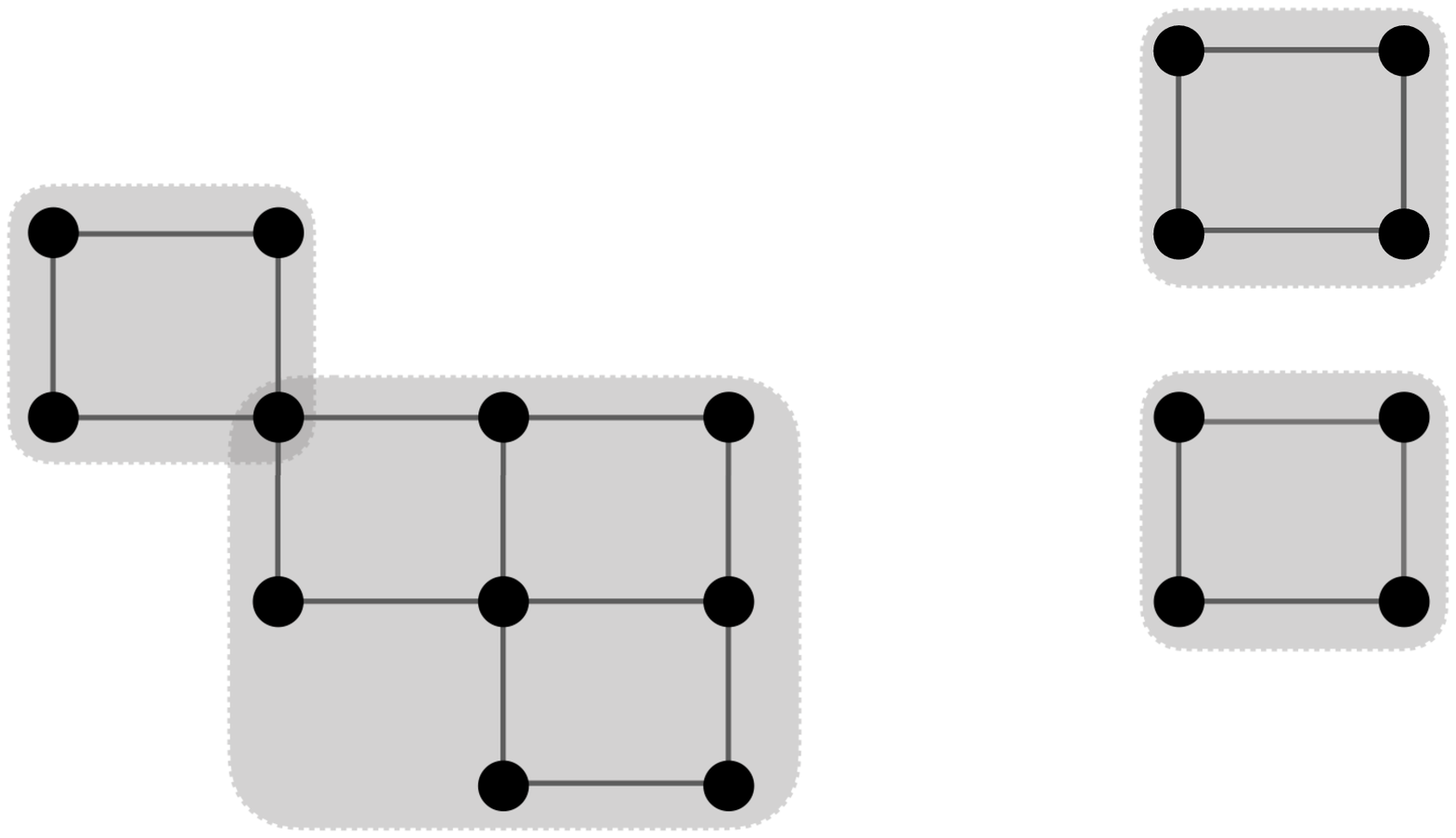}
  \subcaption{Main area is disconnected.}
    \label{fig:af_not_meet_graph}
  \end{minipage}
  \caption{Graph that does not hold the conditions.}
  \label{fig:not_meet_graph}
\end{figure}

The example of our environment is shown in Fig.~\ref{fig:env}, where (forty) green dots are parking nodes for individual agents and blue dots are task endpoints (i.e., loading and unloading locations). Another example is shown in Fig.~\ref{fig:meet_graph}; Fig.~\ref{fig:bf_meet_graph} shows the environment, and Fig.~\ref{fig:af_meet_graph} is its main area. Fig.~\ref{fig:not_meet_graph} shows the graph that does not meet our conditions: actually, the main area is not connected (Fig.~\ref{fig:af_not_meet_graph}), and some nodes in $G_\marginal$ are not nodes in trees. 
\par

%%%%%%%%%%%%%%%%%%%%%%%%%%%%%%%%%%%%%%%%%%%%%%%%%%%%%%%%%%%%%%%%%%%%%%%%

\section{Proposed method}
One feature of the proposed algorithm is simple planning by ignoring the other agents' current plans. Instead, for $\forall v\in V_\main$, we introduce the {\em node agent}, which manages the resource of $v$ and detects the possibility of conflict, that is, collisions between agents. While a (carrier) agent communicates only with the {\em current node agent} that manages the resource of the current node, a node agent communicates with the neighboring node agents and checks the possibility of movement of the agent on itself. The node agent is also represented hereinafter by $v$. A carrier agent generates a path, which is a sequence of nodes to the destination independently and asynchronously, and moves to the next node based on the path while making sure that another agent does not stay at the next node or move toward it by asking the current node agent.
\par

Note that node agents are not necessarily located on the corresponding
nodes because they only manage the reservation of the corresponding
nodes. Thus, they can run on a single machine, on different
  servers in a cloud, or on intelligent sensors near the
  locations. The
only requirement is that the node agent should be able to communicate
with the node agents that manage the neighboring nodes and with the
carrier agent that is currently reserving the node.

\subsection{Orienting Graphs with Reachability}
Our basic idea is to introduce a {\em strong orientation} in the main
area to prevent crossing the same edge, particularly moving in
opposite directions along a long straight path consisting of multiple
edges. This may result in detours, but conversely allows consistent
structural direction for the flows of movements. Thus, agents can
avoid collisions with only local information and resource allocation,
address travel delays, or \remove{more extreme and} sudden stops in an
opportunistic manner. This also eliminates the need for costly
planning considering other agents' paths and negotiation
\remove{processes} to avoid collisions. 
\par

First, we \remove{briefly} introduce the basic concepts related to graph theory.

\begin{definition}
  A directed graph is a {\em strongly connected} iff any pair of nodes has paths in both directions between them.
\end{definition}
\begin{definition}
  An edge in an undirected connected graph is called a {\em bridge} iff the graph is not connected anymore if it is eliminated.
\end{definition}
Evidently, our main area $G_\main$, is bridgeless (or {\em 2-edge
  connected}) and \remove{strongly} connected. Then, the following
theorem is known as the {\em one-way street
  theorem}~\cite{Robbins1939}.
\begin{theorem}
  A bridgeless connected undirected graph can be made into a strongly
 connected graph by consistently orienting (and vice versa).
\end{theorem}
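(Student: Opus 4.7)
The plan is to prove the equivalence in both directions separately. For the easier reverse direction, I would argue by contradiction: suppose the underlying undirected graph contains a bridge $(u,v)$. Removing it disconnects the graph into components $C_u \ni u$ and $C_v \ni v$, and no other edge has one endpoint in each component. In any orientation, the single oriented edge between $u$ and $v$ points only one way, so either $v\to u$ or $u\to v$ is blocked, and no alternative directed $u$--$v$ path can exist, contradicting strong connectivity.

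For the forward direction, I would proceed by an ear-decomposition induction, which fits our graph-theoretic context naturally since $G_\main$ is 2-edge-connected. First I would invoke (or quickly establish) Whitney's classical result that every 2-edge-connected undirected graph admits an \emph{open ear decomposition}: a sequence $G_0 \subset G_1 \subset \dots \subset G_m = G$, where $G_0$ is a simple cycle and each $G_{i+1}$ is obtained from $G_i$ by attaching an ear $P_{i+1}$, i.e., a path whose two endpoints lie in $G_i$ and whose internal vertices (if any) are new. (A single edge with both endpoints already in $G_i$ is a trivial ear.)

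I would then orient the graph inductively along this decomposition. Orient the initial cycle $G_0$ consistently in one rotational direction; this is plainly strongly connected. At step $i+1$, orient the ear $P_{i+1}$ as a directed path from one of its endpoints to the other. The inductive claim is that each $G_i$, with the orientation built so far, is strongly connected. The crucial observation is that if $x$ and $y$ are the two endpoints of $P_{i+1}$ in $G_i$, then by the inductive hypothesis there is already a directed path from $y$ to $x$ in $G_i$; combined with the freshly oriented directed path from $x$ to $y$ along $P_{i+1}$, this produces a directed cycle through every new vertex, so every new vertex can reach and be reached from every old vertex. Hence $G_{i+1}$ is strongly connected, and after the last ear we obtain a strong orientation of $G$.

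The main obstacle is the existence of the open ear decomposition itself, since the rest of the argument is an easy induction. If I did not wish to cite Whitney's theorem, I would give a self-contained alternative via depth-first search on $G$: orient all tree edges away from the root and all back edges from descendant to ancestor. Bridgelessness of $G$ guarantees that every tree edge $(u,\text{parent}(u))$ is covered by at least one back edge from the subtree at $u$ to a proper ancestor of $u$, and a standard argument then shows that from any vertex one can descend via tree edges and ascend via back edges to reach the root, and vice versa, yielding strong connectivity. Either route closes the proof; I would pick the ear-decomposition version as it is shorter to write out cleanly.
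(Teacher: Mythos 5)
The paper does not actually prove this statement: it is Robbins' one-way street theorem, and the authors simply cite Robbins (1939) and then apply it to $G_\main$. So your proposal supplies a proof where the paper has none, and the argument you give is the standard textbook one; both of your suggested routes (ear decomposition and DFS with tree edges oriented away from the root and back edges toward the root) are legitimate proofs of the forward direction, and your bridge argument for the converse is correct.

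One technical slip worth fixing: it is not true that every $2$-edge-connected graph admits an \emph{open} ear decomposition. Open ear decompositions characterize $2$-\emph{vertex}-connected graphs (that is Whitney's theorem); a $2$-edge-connected graph that has a cut vertex, such as two triangles sharing a single vertex, has no open ear decomposition. The correct statement for $2$-edge-connected graphs allows \emph{closed} ears, i.e., ears whose two endpoints coincide, so that a new ear may be a cycle attached at a single existing vertex. This matters here precisely because $G_\main$ is defined as a union of bi-connected components that may meet pairwise in single cut vertices (Proposition~1), so the bowtie situation genuinely occurs in the intended application. The fix costs nothing: in the inductive step, orient a closed ear as a directed cycle through its attachment vertex; strong connectivity of $G_{i+1}$ follows exactly as in your open-ear case. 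With that adjustment (or by falling back on your DFS argument, which needs no ear decomposition at all), the proof is complete.
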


Several efficient algorithms (linear and $\log \abs{E}$) to orient a bridgeless connected undirected graph to make it strongly connected have been proposed~\cite{Atallah1984,Dijkstra76,Tarjan1972}. We orient $G_\main$ using one of these algorithms. Note that edges in $G_\marginal$ remain undirected, which means bi-directional edges that an agent travels in both directions. An example of the oriented environment of Fig.~\ref{fig:bf_meet_graph} is shown in Fig.~\ref{fig:directedGraph}.
\par

\begin{figure}
  \centering
  \includegraphics[width=0.55\hsize]{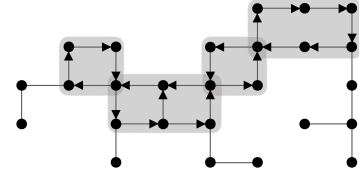}
  \caption{Example of orienting graphs.}
  \label{fig:directedGraph}
\end{figure}

\subsection{Behavior of Carrier Agents}

We assume that the environment $G$ has already been oriented, as
described in the previous section. When task $\tau_k=(v_p^k,
v_d^k,\mu^k)$, is allocated to carrier agent $i$, $i$ will move to its
load location $v_p^k$, and to its unload location $v_d^k$. Therefore,
$i$ sets the destination node, $v_\dst \in V$, to $v_p^k$ or $v_d^k$
in $\tau_k$, depending on the phase of the task progress and then
generates the shortest path $p_i$ (or appropriate path from another
perspective) from the current node $v_\current^i$, to $v_\dst$, using
a conventional method (e.g., $A^*$-search) in the (partly) directed
graph $G$. Herein, we define a path $p$ from node $v$ to node $v'$
\remove{in the 
proposed method} as the sequence of nodes $p=(v_0^p, v_1^p, \dots
v_\abs{p}^p)$, where $v_0^p= v$ and $v_k^p=v'$ and $v_{k-1}^p$ and $v_k^p$
are connected by edge $(v_{k-1}^p, v_k^p)$ or $(v_{k-1}^p\To
v_k^p)$. Note that, unlike other methods for MAPD, we can generate a
path by ignoring time information, that is, when agents arrive and
leave nodes. After generating path $p_i$, $i$ attempts moving to $v_\dst$
in line with $p_i$. 
\par

We denote the current node of agent $i$ at $t$ by
$v_\current^i(t)$. Agent $i$ has the {\em facilitator node agent} (or
simply, {\em facilitator}), $v_\facilitator^i(t)\in V_\main$, which is
identical to the current node $v_\facilitator^i(t)=v_\current^i(t)$ if
$v_\current^i(t)\in V_\main$, and if $i$ is in a tree (i.e.,
$v_\current^i(t)\remove{=v}\in G_\tree^k$), its facilitator node is set to the
root node, $v_\facilitator^i(t)\in G_\tree^k\cap G_\main$. If
$v_\current^i(t)$ and the next node $v_\Next^i \in p_i$ are not in
$V_\main$, $i$ moves to $v_\Next^i$ without
  confirmation. Otherwise (i.e., if 
$v_\current^i(t) \in V_\main$ or $v_\Next^i \in V_\main$), before $i$
at $v_\current^i(t)$ moves to next node $v_\Next^i \in p_i$, $i$ sends
a {\em request message} to node agent $v_\facilitator^i$ with $p_i$ to
reserve $v_\Next^i$. Then, it will receive its reply from
$v_\facilitator^i$. If it is an {\em acceptance message}, $i$ leaves
the current node for $v_\Next^i$ at $t+1$ and releases the reservation
for $v_\current^i(t)$. Note that it may take some time $T_\move \geq
1$ to reach $v_\Next^i$; however, we assume that
$v_\current^i(t+1)=v_\Next^i$ until $i$ leaves there. It also means
that their activities are asynchronous if $T_\move > 1$; that is, when
an agent starts leaving the current node, other agents may already be
in the middle of edges. After $i$ reaches the next node, $i$ attempts
to reserve the next node $v_\Next^i$ based on plan $p_i$.

\def\waitSOM{\textsc{wait}\xspace}
\def\detourSOM{\textsc{detour}\xspace}
\def\WaitSOM{\textsc{Wait}\xspace}
\def\DetourSOM{\textsc{Detour}\xspace}

Meanwhile, if $i$ receives a {\em denial message} from node agent
$v_\facilitator^i$ for reserving $v_\Next^i$, the message contains the
{\em suggestion of movement} (SOM), {\em \waitSOM} (i.e., waiting for
a while) or {\em \detourSOM} with another next node in $V_\main$
(i.e., taking a detour) which neighbors $v_\current^i(t)$. When the
SOM is \detourSOM, $i$ leaves for the specified next node and
generates another (shortest) path $p_i$ from that node to the
destination using a conventional algorithm without considering the
planned paths of other agents. Note that it is probable that the
generated new path contains $v_\Next^i$ that was denied; however, all
edges in $G_\main$ are directed, thus $i$ should take a detour to
return to $v_\Next^i$, therefore, its surrounding situation becomes different. 
\par

\begin{figure}
  \centering
  \includegraphics[width=0.85\hsize]{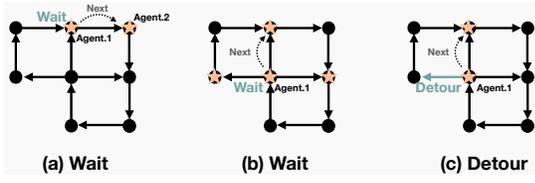}
  \caption{Example of the denial message with SOM.}
  \label{fig:waitDetour}
\end{figure}

\subsection{Node Agent Behavior for Conflict Detection}
Node agent $v \in V_\main$ manages the reservation of the
corresponding node $v$ for the {\em staying (carrier) agent} $i$ at
$v$ as the facilitator. It confirms whether $i$ can move to the next
node by communicating with the neighboring node to determine the
possibility of a collision. This implies that more than two agents
$i,j\in\AgentSet$ attempt to reserve the same node simultaneously, but
we assume that the node agent reads them from its message queue
  one by one.

\par

When node agent $v$ receives a request message to move to neighboring
node $v_\Next^i\in V_\main$ from agent $i$ on $v=v_\current^i(t)$, $v$
asks the vacancy to node agent $v_\Next^i$ by sending a reservation
message. If $v_\Next^i$ is reserved by no other agent at $t+1$,
$v_\Next^i$ reserves its resource for $i$ and $v_\Next^i$ sends the
acceptance message $v$ and it is forwarded to $i$. If $v_\Next^i$ is
already reserved at $t+1$ or another agent $j$ on $v_\Next^i$ is not
decided to move to $v_\Next^j$ at $t+1$, facilitator $v$ forwards a
denial message from $v_\Next^i$ to $i$ with the possible action
label, which is among the following SOMs:
\begin{description}
\item[\WaitSOM:] Node agent $v$ suggests for $i$ to extend the current stay until $t+1$ (thus the extension is not necessarily {$T_\move$}). This is always possible because $v$ accepts the reservation request from another agent only after $i$ has reserved the next node.
\item[\DetourSOM:] Suppose that $v$ has multiple outward-direct
  edges. $v$ sends the reservation message to the neighboring node,
  consecutively, except $v_\Next^i$, and if one of them
  $\tilde{v}_\Next^i$ accepts it, $v$ sends the denial message with
  \detourSOM and the reserved node $\tilde{v}_\Next^i$ for $i$.
\end{description}

Node agent $v$ should select which SOM, \waitSOM or \detourSOM,
depending on the situation. Both SOMs have their pros and cons; the
\waitSOM SOM may block other agents, whereas the \detourSOM SOM may
force $i$ to take a detour. However, in our experiments, node agent
$v$ attempted to send the \detourSOM, and when it was not possible, it
sent \waitSOM, because ensuring that the agent's flow is not disrupted
is effective in the efficient execution of tasks, particularly in a
crowded situation. Examples are shown in Fig.~\ref{fig:waitDetour};
the facilitator of Agent~1 sends a denial message with \waitSOM in
Figs.~\ref{fig:waitDetour}a and b because all neighboring nodes are
already reserved, whereas the facilitator sends the message with
\detourSOM in Fig.~\ref{fig:waitDetour}c because two neighboring nodes
are not reserved and one of them is randomly selected, although the
next node planned in Agent~1 has been reserved. 
\par

\subsection{Collision Detection in the Marginal Zone}
First, suppose that tree-structured subgraph (area) $G_\tree^k$ does not include parking nodes. If node agent $v\in V_\Root$ is the root of $G_\tree^k$, $v$ also manages to restrict the number of agents entering $V_\tree^k\setminus \{v\}$ to one. Therefore, if agent $i$ on $v$ at time $t$ (i.e., $v=v^i_\current(t)=v^i_\facilitator$) attempts to move to $v_\Next^i\in V_\tree^k$, $i$ sends a request message to $v$. Then, $v$ sends back an acceptance message to $i$ only when no other agent is currently in $G_\tree^k\cup G_\marginal$; otherwise, $v$ sends a denial message with detour as a SOM if possible. Moreover, if $v$ cannot find the neighboring node to which $i$ can move, $v$ sends the denial message with \waitSOM.
\par

Meanwhile, when $v_\Next^i=v$ and $v_\current^i(t)\in V_\tree^k\cap V_\marginal$, $i$ sends the request to its facilitator agent $v$ ($=v_\facilitator^i=v_\Next^i$) to reserve $v_\Next^i$. If $v$ can reserve itself for $i$ at $t+1$, $v$ sends an acceptance message to $i$, otherwise, it sends a denial message with \waitSOM. Agent $i$ on $ v^i_\current \in G_\tree^i\cap G_\marginal$ does not send a request message if $v_\Next^i\not\in V_\main$.
\par

When $G_\tree^k$ includes several parking nodes, its root node $v\in
V_\Root$ has two techniques to manage the number of agents entering
it. One technique is to make it one-way as the MAPDFS progresses; that
is, at the beginning of a MAPDFS instance, $v$ restricts the direction
of movement only to the main area and thereafter only to the interior
of the tree area. However, once an agent returns to the parking node,
it cannot go back to the main area. Another method is by managing
agents entering the tree area with \remove{task} endpoints; $v$ restricts the
number of agents entering $V_\tree^k\setminus \{v\}$ to one; however,
when it arrives at the parking node, $v$ ignores it; thus, another
agent can enter this area. Meanwhile, when agent $i$ at the parking
node attempts to go to the main area, it asks its facilitator
$v_\facilitator^i=v$ for the possibility of leaving. Then, $v$ accepts
it only when there are no other moving agents in $G_\tree^k$;
otherwise, $v$ sends a denial message with \waitSOM as a SOM to
$i$. We used the first technique in our experiments below.

\subsection{Number of Open Nodes in the Main Area}\label{sec:open}
Finally, we discuss the efficiency and difference between the numbers
of agents and nodes in the main area. We call a node that is not
reserved by any agent as an {\em open node}, and any agents can move
to the next nodes only when they are open. Because each agent reserves
one different node, if $\abs{V_\main}=\abs{\AgentSet}$, then all
agents cannot move anyhow. If $\abs{V_\main}-\abs{\AgentSet}=1$ and
all agents are in $G_\main$, agents are confined within the current
bi-connected components and cannot move to neighboring bi-connected
components. This is because, when agent $i$ in $G_k$ successfully
reserves the next open node, the current node becomes open in the next
time; thus, an open node looks like moving backward. Therefore, for
agent $i$ to enter to another neighboring bi-connected component
$G_{k'}\setminus G_k$, an open node should also be in $G_{k'}\setminus
G_k$ and in front of $i$. However, this situation cannot happen if
there is only one open node.
\par

When $\abs{V_\main}-\abs{\AgentSet}=2$, it is possible that agent $i$
enters to a neighboring bi-connected component $G_{k'}\setminus G_k$
only when two open nodes are in $G_k\setminus G_{k'}$ and
$G_{k'}\setminus G_k$. However, such a situation can happen but is
mostly coincidental. Therefore, agents can reach their destination but
almost randomly; agents can complete all tasks but it will take a long
time. Consequently, it is evident that the more open nodes
are in the main area, the easier it is for agents to move to the
desired neighboring node. Thus,\remove{Therefore,} it is recommended
that more than half of the main area have open nodes for efficiency.
\par

\def\SpeedDelayProb{\nu}
\def\abs#1{\vert #1 \vert}
\def\noise{{\it nse}}

\begin{table}
 \caption{Experimental parameter values.}
 \label{table:expTable}
 \centering
 \tabcolsep=5pt
 \begin{tabular}{llll}
   \toprule 
   Parameter description and symbol & Value \\
   \midrule 
   Normal required time for neighboring node, $T_\move$  & 3\\
   Time to load/unload, $T_\loadunload$ & $3$ or $6$\\   
   Moving delay probability, $\SpeedDelayProb$ & 0 to 0.2 \\
   Moving delay time, $T_\noise$ & 1 or 2 \\ \bottomrule 
  \end{tabular}
\end{table}

%%%%%%%%%%%%%%%%%%%%%%%%%%%%%%%%%%%%%%%%%%%%%%%%%%%%%%%%%%%%%%%%%%%%%%%%

\begin{table*}
  \caption{Completion rate and planning time of MAPDFS instances.}
  \label{table:completedTask}
  \centering
  \tabcolsep=3pt 
  \begin{tabular}{cccccccccccccccccccc}
    \toprule
    \multicolumn{3}{c}{} & \multicolumn{14}{c}{Number of agents} & \\ 
    \cmidrule{3-19}
    \multicolumn{1}{c}{} & \multicolumn{1}{c}{Alg.}
    & 2 & 4& 6 &8 & 10 & 12 & 14 & 16 & 18 & 20 & 22 & 24 & 26 & 28 & 30 & 35 & 40 \\
    \midrule
    completion rate &Proposed $\SpeedDelayProb=0$&1.0&1.0&1.0&1.0&1.0&1.0&1.0&1.0 &1.0&1.0&1.0 &1.0&1.0&1.0&1.0&1.0&1.0 \\
       		     &HTE &1.0&1.0&1.0&1.0&1.0&1.0&1.0&1.0 &1.0&1.0&1.0 &1.0&1.0&1.0&1.0&1.0&1.0 \\
       		     &RHCR &1.0&0.92&0.86&0.78&0.52&0.28&0.18 &0.02&0.0&0.0 &0.0&0.0&0.0&0.0&0.0&0.0&0.0 \\
    \midrule
    planning time  &Proposed $\SpeedDelayProb=0$&0.09&0.15&0.22&0.29&0.36&0.46&0.54&0.63 &0.71&0.84&0.96 &1.09&1.20&1.36&1.56&2.09&2.78 \\
       	         	  &HTE &11.2&11.6&12.2&13.6&15.0&15.1&14.9&15.0 &15.1&15.1& 15.1&15.5 &15.4&15.1&15.3&15.5&15.5 \\
       		          &RHCR &90.8&128.1&167.5&209.1&270.4&358.7&421.9&640.4&-&- &-&-&-&-&-&-&- \\    \bottomrule
  \end{tabular}
\end{table*}

\section{Experiments and Discussion}

\subsection{Experimental Setting}
We evaluated the proposed method using MAPDFS instances in two
environments that satisfy our required conditions (SC1- SC3) and are
likely to appear in our application (Fig.~\ref{fig:env}). They have a
small number of task endpoints (blue dots) that correspond to
load and unload nodes for tasks. The first environment (Env.~1)
in Fig.~\ref{fig:env1} has 10 task endpoints which are placed at the
ends of tree-structured areas and satisfy the WFI condition for other
methods (e.g., HTE). The second environment (Env.2)
in Fig.~\ref{fig:env2} also has 10 task endpoints, but they are placed
in the main area; thus, it does not satisfy the WFI condition. When an
agent loads or unloads at the task endpoint, it may block other agents
for a while until the loading/unloading is completed; however, we
think that this is a common practice in construction sites and an
inevitable part of the process. 
\par

For comparison, we implemented two existing methods as baseline, {\em HTE~\cite{MaAAMAS2017} and {\em RHCR~\cite{Li2021AAAI}. HTE is a decentralized method in which agents plan their paths consecutively by referring to the synchronized shared memory block that contains information about the task set and all agents' paths, including visit and leave times. RHCR decomposes a MAPD problem into a sequence of windowed MAPF instances. Then, agents in RHCR find and resolve conflicts that occur within the next $w$ timesteps and replan paths every $h$ timesteps. Although RHCR does not require the WFI condition unlike HTE, the deadlock avoidance in RHCR is incomplete. Note that we used the {\em priority-based search}~\cite{Ma2019AAAI} for MAPF solver of RHCR, and set $w=60$ and $h=15$ by referencing the original experiments~\cite{Li2021AAAI}, in which $w=20$ and $h=5$; thus, we multiplied them by $T_\move=3$.
\par

In the first experiment (Exp.~1), we compared the performance of our method with those of the two baseline methods in Env.~1, which satisfied the WFI condition for HTE with constant moving speed, because the baseline methods cannot handle delays. In the second experiment (Exp.~2), we confirmed whether agents in our method could complete all tasks in Env.~2, which did not meet WFI condition and had a small negative swing (i.e., delay). Therefore, we conducted Exp.~2 only using our method. We set $T_\loadunload=6$ in Exp.~2, whereas $T_\loadunload=3$ in Exp.~1. Thus, agents blocked other agents longer. We used three evaluation measures: (1) the rate of completion of MAPD instances, (2) makespan (i.e., the time required to complete all tasks), and (3) planning (CPU) time. 
\par

All agents started from their parking nodes (green dots in Fig.~\ref{fig:env}). The loading and unloading nodes for a task were randomly selected from the set of task endpoints (blue dots), and the initial tasks were assigned simultaneously to all agents. An agent was assigned a new task after completing the current task, and this process was repeated until all tasks in $\TaskSet$ were completed. If an agent was not assigned a task because all tasks had already been assigned, it returned to its parking node.
\par

To model more realistic robotic movements, we added noise to the moving speed $T_\noise$ at probability $\SpeedDelayProb$ ($0\leq \SpeedDelayProb \leq 1$) when an agent moved to the next node in Exp.~2. Therefore, for agents to often move to the neighboring node in $T_\move=3$ timesteps, but with probability $\SpeedDelayProb$, it required $T_\move + T_\noise$ timesteps, where $T_\noise$ was randomly selected by either $1$ or $2$. We set the number of agents from 2 to 40 and the number of tasks to $\abs{\TaskSet}=100$. We have listed all other parameter values in Table.~\ref{table:expTable}. All experimental data are as the average of $50$ independent trials using apple M1 Max CPU with 64 GB RAM.

\subsection{Performance Comparison}
\subsubsection{Completion Rate}
We investigated the rate of completeness in our 50
runs. Table~\ref{table:completedTask} lists the rate of completed
MAPDFS instances \remove{of the MAPDFS instances} with different
numbers of agents in 
Exp.~1. Here, we considered it as a failure if running time exceeded
the limit of timestep ($10000$ timestep), the planner could not find a
collision-free path, or a collision
occurred. Table~\ref{table:completedTask} indicates that agents in our
method could complete all instances in Exp.1 without failures and
collisions. HTE could also complete all instances in Exp.~1 because
Env.~1 meets the WFI condition. However, the completion rate of RHCR
rapidly decreased with the increasing number of agents $n$ and became
zero eventually when $n\geq 18$. This is because agents often headed
for the few same task endpoints as their destinations and the areas
near the task endpoints were congested. Thus, the prioritized planning
in RHCR seemed difficult to avoid collisions in these situations. Note
that the data are not listed here, but we also conducted Exp.~2 using
the baseline methods but they failed in all instances, although our
method completed all instances. 
\par

\begin{figure}
  \centering
  \includegraphics[width=0.7\hsize]{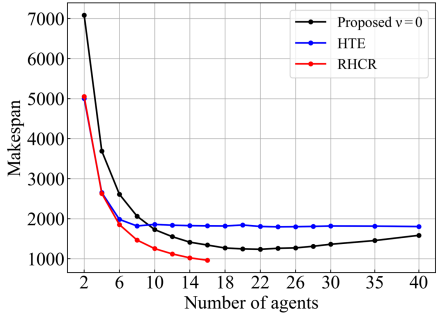}
  \caption{Makespans in Exp.~1.}
  \label{fig:makespan_exp1}
\end{figure}

\begin{figure}
  \centering
  \includegraphics[width=0.7\hsize]{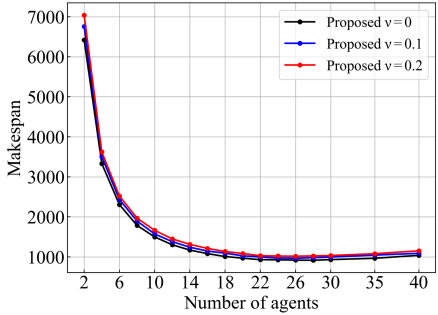}
  \caption{Makespans in Exp.~2.}
  \label{fig:makespan_exp2}
\end{figure}

\begin{figure}
  \centering
  \includegraphics[width=0.7\hsize]{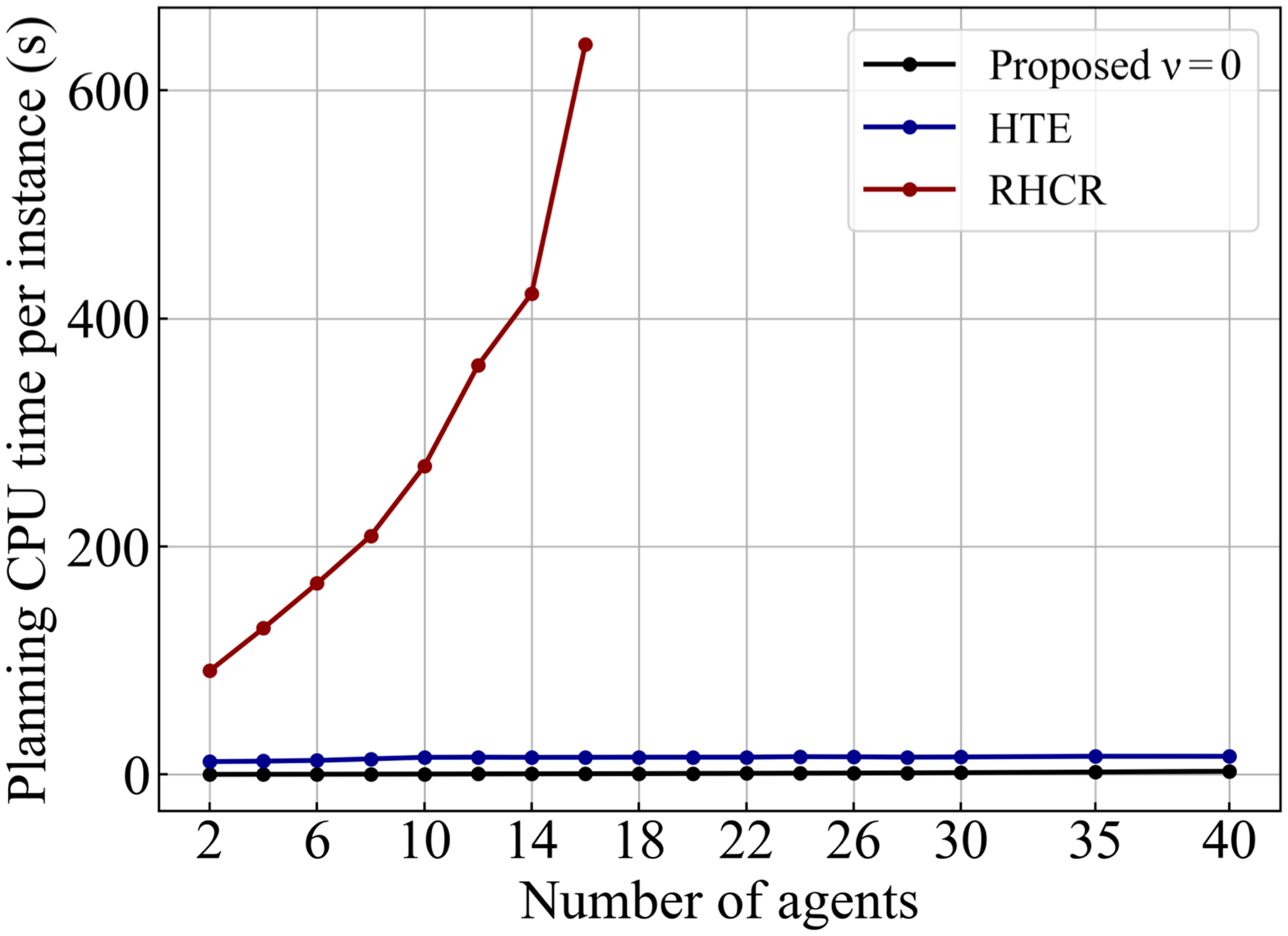}
  \caption{Planning time per instance (s) in Exp.~1.}
  \label{fig:cputime_exp1}
\end{figure}

\begin{figure}
  \centering
  \includegraphics[width=0.7\hsize]{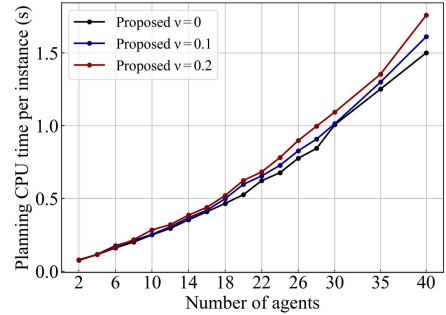}
  \caption{Planning time per instance (s) in Exp.~2.}
  \label{fig:cputime_exp2}
\end{figure}

\subsubsection{Makespans}
Figure~\ref{fig:makespan_exp1} plots the average makespan (in timesteps) with different numbers of agents in Exp.~1. Note that the failure instances were excluded from the average calculation. This figure shows that, even if the number of agents $n$ increased to approximately six, the makespan could be shortened regardless of the methods employed.
\par

However, when $n\geq 8$, we observed performance differences with
these methods; agents with the proposed method exhibited the best
performance. They could gradually decrease their makespans with an
increasing number of agents up to $22$, but the performance slightly
degraded when $n\geq 24$. This small degradation was caused by
over-crowded areas near task endpoints by increasing the number of
agents. Meanwhile, when $2\leq n\leq 6$, the performances of the
baseline methods were better than that of the proposed method. Even if
the number of task endpoints was not significantly large, the baseline
methods enabled the agent to move the environment in
parallel. However, agents with the proposed method were sometimes
forced to take longer detours by following the directions of edges. 
\par

Conversely, the performance of the agents with HTE was almost constant
when, $n\geq 8$. Whereas Env.~1 met the WFI conditions, it had a
small number of task endpoints that were fewer than the agents. Thus,
HTE could assign a limited number of tasks to agents because an agent
could not select or be assigned the task whose loading or unloading
node was already reserved as the task endpoints of other being
executed tasks. 
\par

Although RHCR considerably outperformed other methods until $n\leq 16$ in Exp.~1, after the number exceed $16$, we could not calculate the makespan because no instances of the MAPDFS problem could be completed by RHCR owing to the congestion as discussed before. Even when $n=12$, the completion rate by agents with RHCR was\remove{approximately} $0.28$ from Table~\ref{table:completedTask}; it is not realistic to use RHCR in our target applications because of the low completion rate. 
\par

We plotted the average makespan for Exp.~2 in Fig.~\ref{fig:makespan_exp2} to investigate the effects of the fluctuation on movement speed on the makespan in MAPDFS. This figure shows that the performance gradually decreased with increasing moving delay probability $\SpeedDelayProb$. However, their difference is insignificant. Therefore, the result indicates that the proposed methods are robust against the fluctuation in movement speed. We also conducted the experiments by setting $T_\loadunload=3$; however, the difference was significantly small. We believe that this effect was caused by the orientation in the main area; although an agent had to wait for loading/unloading of other agents, they started to move in the same direction according to the orientation. Thus, agents did not have to worry about head-on collisions and could wait next to where it was loading/unloading.

\subsubsection{Planning (CPU) time}
Figure~\ref{fig:cputime_exp1} shows the averaged planning time for all agents per instance with a different number of agents in Exp.~1. We have also listed the detail of the total planning time in Table~\ref{table:completedTask}. Clearly, the planning time with the proposed method is much smaller than those of other methods, regardless of the number of agents. This is because, unlike HTE, agents with the proposed method could generate paths without time information and without considering other agents' paths.
\par

Clearly from the figure, the planning time with HTE was almost
identical when the number of agents was $n\geq 10$ because the number
of agents moving in parallel was limited and only the active agents
generated plans. Conversely, in the proposed method, the planning time
was slightly increased based on the increase in $n$, because all
agents move in parallel and require time for their planning. However,
even when $n=40$, the total planning time of the proposed method was
only 2.78 seconds owing to the simple distributed planning. Meanwhile,
RHCR required considerably large planning time
(Fig.~\ref{fig:cputime_exp1}) because all agents are required to
replan at least once in $h=15$ timesteps by interleaving planning and
execution.
\par

Figure~\ref{fig:cputime_exp2} shows the averaged planning time per instance with the moving delay probability $\SpeedDelayProb=0, 0.1, 0.2$ in Exp.~2. Unlike Exp.1, agents often were forced to stay longer at the same nodes in the main area due to delay by other agents and loading/unloading actions of other agents. Thus, agents might receive denial messages more frequently. However, the effect of fluctuation in movement speed on planning time was significantly small from this figure.

\subsection{Discussion}
The proposed method completed all tasks without collision or entering
deadlock states using distributed planning with asynchronous execution
in the\remove{our target} environments that satisfy our required conditions, such
as the robustness to fluctuated speed. This is because node agents
prevented carrier agents from moving to the next nodes that are
already reserved and/or stayed by other agents, regardless of delay
owing to the speed variations, and direct edges prevented agents from
crossing the same edge in opposite directions.
\par

Furthermore, our method outperformed the baseline methods (HTE and
RHCR) in environments with a small number of endpoints when the number
of agents was more than 10. Our experimental results show that the
proposed method can increase the concurrency of task execution and
mitigate the performance degradation caused by crowded regions. HTE
was unable to increase the number of concurrent task executions
because the number of task endpoints was smaller than the number of
agents. Further, RHCR required to replan repeatedly in all agents with
synchronous planning and execution, which increased computational cost
and could not complete all tasks within a reasonable time because of
many live and deadlock situations caused by congestion. 
\par

In the proposed method, after an agent generated a path to the
destination, it moved to the next node with local communications to
check the availability and modify the path if necessary. Therefore, it
was considered similar to the family of {\em local repair algorithms}
with limited window size, such as traditional {\em local repair
  A*}~\cite{LRA:1992}, its extension algorithms, and RHCR. A drawback
of this type of algorithm is that, when many agents gather at a small
number of nodes, they may cause a high likelihood of collisions, many
livelock states, and many costly repairs/replanning because of
congestion. However, as the proposed method introduces an orientation
into the graph, it prevents, for example, an agent from being
sandwiched between other agents coming from the left and
right. Moreover, because agents are navigated in the direction in
which they can move, even when agents' destinations are concentrated
and crowded, agents can be moved temporarily to surrounding areas to
maintain their mobility. 
\par

Finally, we have to discuss more on the number of open nodes in the
main area and the number of (carrier) agents. As mentioned in
Section~\ref{sec:open}, when $\abs{V_\main}-\abs{\AgentSet}=2$, only
two agents can start moving simultaneously but almost randomly. This
restriction is significantly different in decentralized methods
assuming synchronous
movements~\cite{MaAAMAS2017,ma2019lifelong,WiktorIROS2014,Okumura2019IJCAI},
in which all agents move synchronously; thus, agents can move even
when $\abs{V_\main}=\abs{\AgentSet}$. However, we assumed the
asynchronous movements, and such movements are impossible. Furthermore,
because of the asynchrony in the distributed environments, how to move
is affected by the timing of activities, such as the time/order of
message arrivals. Thus, the performance is partially affected by
randomness. However, in our extensive experiments using our method,
agents completed all tasks.

\section{Conclusion}
We presented a distributed planning with asynchronous execution
methods which is an efficient and robust solution for realistic
environments. Our method is simple yet applicable to environments that
have a smaller number of task endpoints than agents and include
  the fluctuated movement speed of agents. From our experiments, the proposed
method outperformed baseline methods for MAPD problem and even in the
environments to which they are not applicable because of variable
speed and flexible endpoint locations. Our method completed all tasks
efficiently without collision and deadlock in such environments. 
\par

In the future, we plan to extend our method; for example, we will
relax environmental graph conditions, propose appropriate graph
orienting to improve the effectiveness and efficiency, and address
complex tasks (e.g., a task can be executed by multiple agents).

%%%%%%%%%%%%%%%%%%%%%%%%%%%%%%%%%%%%%%%%%%%%%%%%%%%%%%%%%%%%%%%%%%%%%%%%

%%% The acknowledgments section is defined using the "acks" environment
%%% (rather than an unnumbered section). The use of this environment 
%%% ensures the proper identification of the section in the article 
%%% metadata as well as the consistent spelling of the heading.

%\begin{acks}
%If you wish to include any acknowledgments in your paper (e.g., to 
%people or funding agencies), please do so using the `\texttt{acks}' 
%environment. Note that the text of your acknowledgments will be omitted
%if you compile your document with the `\texttt{anonymous}' option.
%\end{acks}

%%%%%%%%%%%%%%%%%%%%%%%%%%%%%%%%%%%%%%%%%%%%%%%%%%%%%%%%%%%%%%%%%%%%%%%%

%%% The next two lines define, first, the bibliography style to be 
%%% applied, and, second, the bibliography file to be used.

\bibliographystyle{ACM-Reference-Format} 
\bibliography{bib_desk}

\clearpage

\section*{Appendix}
%\subsection*{Required Environmental Conditions}
\subsection*{Proposition 1 with Detailed Proof}
We will provide the detailed proof of Proposition 1, although it is
intuitively obvious.

\addtocounter{proposition}{-1}

\begin{proposition}
Suppose that $G_\main=G_1\cup \dots\cup G_K$ is connected graph and $G_k$ is the
bi-connected component.
If $V_k\cap V_{k'}\not=\varnothing$, the $V_k\cap V_{k'}$ is a singleton.
\end{proposition}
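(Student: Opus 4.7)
My plan is to argue by contradiction along the lines sketched by the authors. Assume there exist distinct indices $k,k'$ and two distinct nodes $u,w \in V_k \cap V_{k'}$. The goal is to show that the induced subgraph $H$ on $V_k \cup V_{k'}$ is itself bi-connected. Since $G_k$ and $G_{k'}$ are distinct maximal bi-connected subgraphs (distinctness of the vertex sets follows from $k \neq k'$, because a maximal bi-connected subgraph is determined by its vertex set), after possibly swapping the two indices we have $V_k \subsetneq V_k \cup V_{k'}$. Hence $H$ is a strictly larger bi-connected subgraph extending $G_k$, contradicting the maximality of $G_k$ as a bi-connected component.

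The technical core of the argument is to verify that $H$ has no cut vertex; its connectedness is immediate from the fact that $G_k$ and $G_{k'}$ are connected and share at least the vertex $u$. Suppose for contradiction that some $v \in V_k \cup V_{k'}$ is a cut vertex of $H$. Because $u \neq w$, I can choose $z \in \{u,w\} \setminus \{v\}$, so $z \in V_k \cap V_{k'}$ and $z \neq v$. I then aim to show that every $x \in (V_k \cup V_{k'}) \setminus \{v\}$ is connected to $z$ inside $H \setminus \{v\}$: if $x \in V_k$, use that $G_k$ is bi-connected on at least three vertices, so $G_k \setminus \{v\}$ remains connected (or, when $v \notin V_k$, $G_k$ itself lies inside $H \setminus \{v\}$), giving a path from $x$ to $z$ entirely within $V_k \setminus \{v\}$; symmetrically for $x \in V_{k'}$. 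Every vertex of $H \setminus \{v\}$ thus connects to the common anchor $z$, contradicting the assumption that $v$ is a cut vertex.

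The only step that requires real care is the claim that a bi-connected graph on at least three vertices stays connected after deleting a single vertex. I would justify it directly from the paper's definition: for any two remaining vertices $a,b$, pick a cycle in $G_k$ through $a$ and $b$; this cycle yields two internally disjoint $a$--$b$ paths, and removing the single vertex $v$ can destroy at most one of them, leaving the other inside $G_k \setminus \{v\}$. The edge-case $|V_k|=2$ cannot occur in the simple graphs considered here, since two vertices alone do not admit a cycle through both of them. I expect this cut-vertex lemma to be the only non-bookkeeping obstacle; once it is in place, the contradiction with maximality is immediate and the proposition follows.
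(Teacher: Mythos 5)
Your overall strategy --- pass to the subgraph $H$ induced on $V_k\cup V_{k'}$, show it is bi-connected, and contradict maximality of $G_k$ --- matches the paper's, but the way you certify bi-connectedness is genuinely different and, as written, leaves a gap. The paper's definition of \emph{bi-connected} is that every pair of nodes lies on a common cycle. The paper verifies this directly for the only nontrivial pairs, $v\in V_k$ and $v'\in V_{k'}$: it takes two internally disjoint $v$--$w$ paths and two $v$--$w'$ paths inside $G_k$, splices them at the last intersection point to extract a $v$--$w$ path and a $v$--$w'$ path meeting only at $v$, does the same at $v'$ inside $G_{k'}$, and concatenates through $w$ and $w'$ to exhibit the required cycle through $v$ and $v'$. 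You instead show that $H$ is connected and has no cut vertex. That part of your argument is sound (the reduction to the common anchor $z\in\{u,w\}\setminus\{v\}$ is a nice simplification, and your proof that deleting one vertex from a bi-connected $G_k$ leaves it connected is correct under the paper's definition). But ``connected with no cut vertex'' is not the paper's definition of bi-connected, so the contradiction with maximality does not yet follow.

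The concrete gap is your closing claim that the vertex-deletion lemma is ``the only non-bookkeeping obstacle.'' You prove the easy direction of the equivalence (every pair on a common cycle $\Rightarrow$ the graph survives deletion of one vertex) and then silently use the hard direction (no cut vertex and at least three vertices $\Rightarrow$ every pair lies on a common cycle) to land back in the paper's definition. That converse is Whitney's characterization of $2$-connectivity; it is true, but it is a theorem whose proof (via Menger, or by induction on distance) contains essentially the same path-splicing work that the paper's proof carries out explicitly. To repair the proposal, either prove or explicitly invoke Whitney's theorem, or argue the cycle condition directly for a pair $a\in V_k\setminus V_{k'}$, $b\in V_{k'}\setminus V_k$ as the paper does. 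If you are allowed to cite Whitney's theorem as known, your route is shorter and more modular than the paper's, which spends most of its length on the intersection bookkeeping for the four paths.
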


\def\tm{{\widetilde{m}}}

\begin{proof}
If $V_k\cap V_{k'}$ contains two nodes, $w$ and $w'$, we can show that
$V_k\cup V_{k'}$ is bi-connected. Therefore, we demonstrate that, for
any nodes $v\in V_k$ and $v'\in V_{k'}$, there exist two paths in
$G_k$ between $v$ and $w$ and between $v$ and $w'$, whose common nodes
are only $v$. Thus, we can also generate two paths in $G_{k'}$ between
$v'$ and $w$ and between $v'$ and $w'$, whose common nodes are only
$v'$. Then, by connecting these paths from $v$ to $v'$ through $w$ and
from $v$ to $v'$ through $w'$, we can proof that $V_k\cup V_{k'}$ is
bi-connected. First, because $G_k$ is bi-connected, we can generate at
least two paths in $G_k$ connecting between $v$ and $w$ such that
these paths do not have the common node, except $v$ and $w$.
We denote these paths by sequences of nodes, $p_{v\To
  w}^0=(v_0^0,v_1^0, \dots, v_{h_1}^0)$, and $p_{v\To w}^1=(v_0^1,
\dots, v_{h_2}^1)$, where $v_0^0=v_0^1=v$, $v_{h_1}^0=v_{h_2}^1=w$,
and $v_l^0$ and $v_{l'}^1\in V_k$. Similarly, we can generate two
paths between $v$ and $w'$ whose common nodes are only end
nodes. These paths are denoted by $p_{v\To w'}^0=({v'}_0^0, \dots,
{v'}_{{h'}_1}^0)$ and $p_{v\To w'}^1=({v'}_0^1, \dots,
{v'}_{{h'}_2}^1)$, where ${v'}_0^0={v'}_0^1=v$,
${v'}_{h_1}^0={v'}_{h_2}^1=w$, and ${v'}_l^0$ and ${v'}_{l'}^1\in
V_k$.
If $\exists m\in\{0,1\}$ such that $p_{v\To w}^m$ that intersects,
except at their end nodes, at most one of $p_{v\To w'}^1$ and
$p_{v\To w'}^2$, we can select $p_{v\To w'}^{m'}$ that does not
intersect $p_{v\To w}^m$.
If such a path does not exist, $\forall m\in\{0,1\}$ such that
$p_{v\To w}^m$ that intersects both $p_{v\To w'}^0$ and $p_{v\To
  w'}^1$. Let $v_l^m$ be the last nodes in $(v_0^m,v_1^m, \dots,
v_{h_m}^m)$ at which intersect $p_{v\To w'}^0$ or $p_{v\To
  w'}^1$. Assuming that $v_l^m$ is selected from $p_{v\To w'}^\tm$
($\tm\in\{0,1\}$), if
we consider paths $p_{v\To w}=({v'}_0^\tm, \dots, {v'}_{l'}^\tm=v_l^m,
\dots, v_h^m, v_{h_m}^m)$ and $p_{v\To w'}^{1-\tm}$, these have no
intersection except $v$.
\end{proof}

\begin{figure}
  \centering
  \includegraphics[width=0.85\hsize]{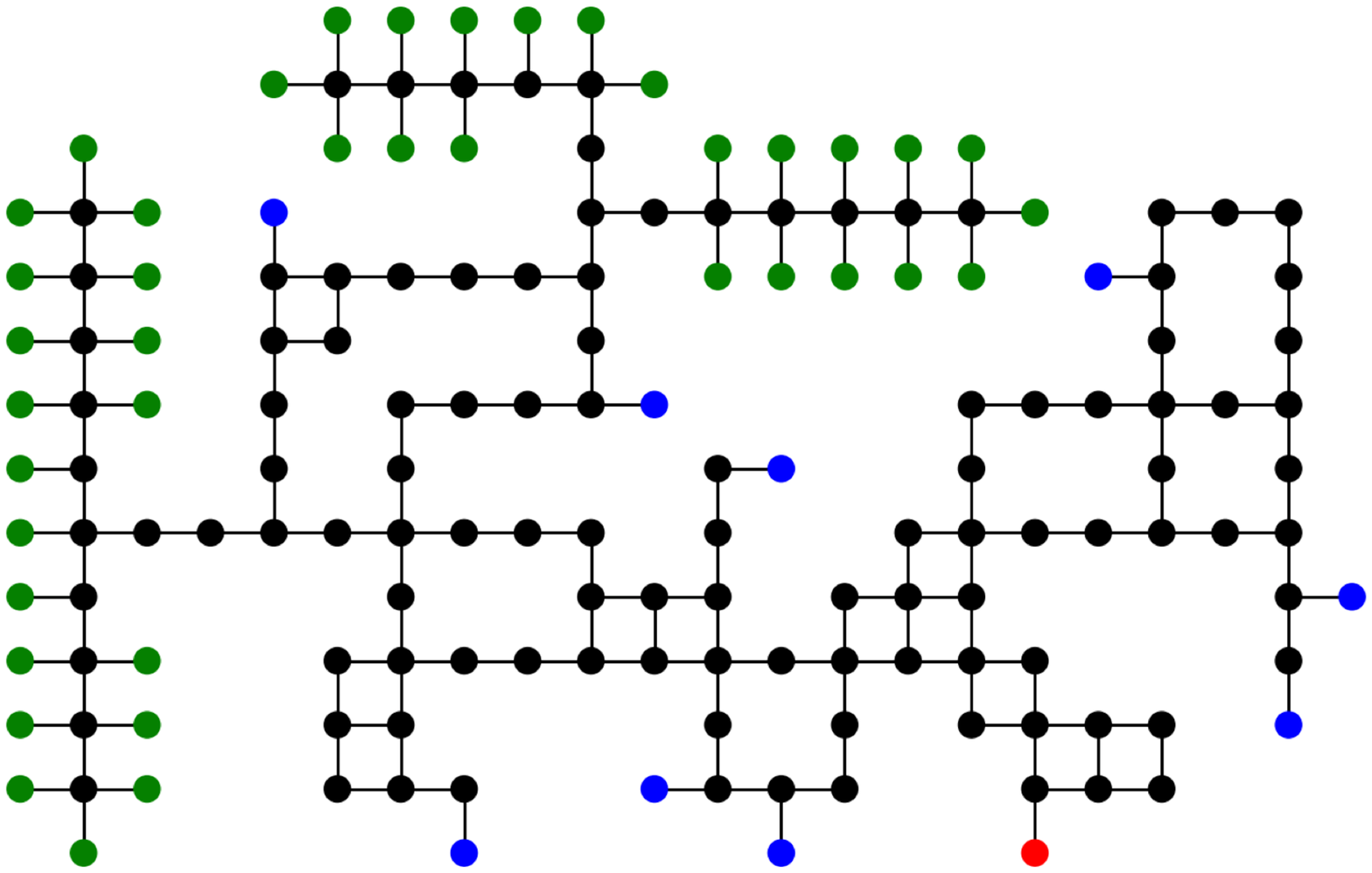}
  \caption{Example environments (Env.3).}
  \label{fig:env_3}
\end{figure}

\begin{figure}
  \centering
  \includegraphics[width=0.85\hsize]{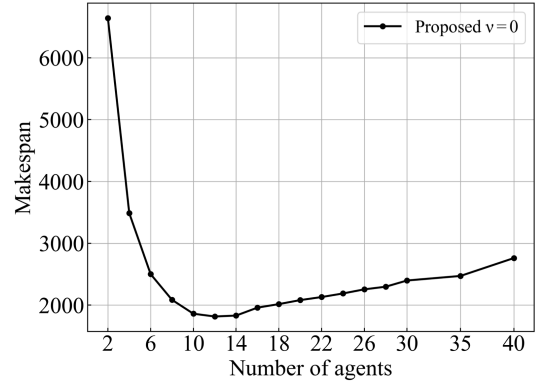}
  \caption{Makespans in Exp. 3.}
  \label{fig:makespan_exp3}
\end{figure}

\begin{figure}
  \centering
  \includegraphics[width=0.35\hsize]{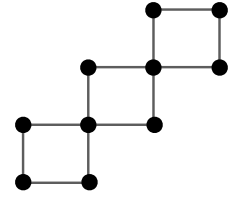}
  \caption{Example environments (Env.4).}
  \label{fig:env_4}
\end{figure}

\subsection*{Videos of Agents' Movements in (Additional) Experiments}
We have added videos of the agents' movements in our
experiments in this appendix.
We also conducted two additional experiments to confirm whether agents
using the proposed method can complete all task without collision in
rare environments where over-crowded often occur. We briefly describe
these experiments below.
\par

In the first additional experiment (Exp.3), the structure of environment
was identical to Exp.1, however, the loading node was selected from
only red node, and the unloading nodes were selected from blue node as
shown in Fig.~\ref{fig:env_3}.  This assumes that the materials are
stored at the specific place (red node) and are carried to various nodes;
therefore, over-crowded situations are likely to occur around the red node. 
\par

Figure~\ref{fig:makespan_exp3} plots the average makespan with
different number of agents $n$ in Exp.3. Note that we conducted fifty
experimental runs as in other experiments.
The experimental result shows that agents using proposed method could
also complete all instance in Exp.3 regardless of the number of agents
whose max is $40$.
The agents with proposed
method could reduce the makespan along with the increasing of the number of
agents to $n=12$. After that, as the number of agents increased when
$n\geq 14$, the performance gradually decreased probably due to
over-crowded situations around a single loading node (red node).
Furthermore,  we can see that 
makespan was considerably larger in Exp.~3 by comparing
Figs.~6, 7 and \ref{fig:makespan_exp3}. This is also because there is
only one endpoint for loading, and thus, all agents are likely to
gather this endpoint. We want to insist that agents with the proposed
method still perform MAPD instances without deadlock
situations as shown in the following video, indicating
that our proposed methods are robust even in such a over-crowded
situation.
\par

In the second additional experiment (Exp.4), we tested the relevance
between the number of nodes in main area and the number of agents,
mentioned in main paper. There are eight agents running the MAPD
problem instance in three bi-connected components whose the total
number of nodes is ten, as shown in Fig.~\ref{fig:env_4}; therefore,
$\abs{V_\main}-\abs{\AgentSet}=2$. We also assume that all nodes are
endpoints, so an agent can be load and unload any node.
All agents initially placed at the random nodes and began to perform
the pickup-and-delivery tasks. We believe that this situation was
pathological. However, we confirmed that agents could
complete an MAPD instance in Exp.4, although agents could reach their
destination almost randomly so required so long time.
\par

The followings are list of the sample videos to show the movements of
the agents using proposed method in our four experiments.
Note that, due to limit of submission file size,  we make the short
clip of video. Note that in these videos, the direction of an edge is
represented by a long triangle, whereas the undirected edge is represented by a bold line.

\subsubsection*{\bf Experiment 1 (Exp.1)}\mbox{}\\
Agents used proposed method, the number of agent was 22, the
environment is Env.~1, and
%sample video can be found at \url{https://youtu.be/l4xnnsy5TJs}.
sample video is available at the url \url{https://youtu.be/l4xnnsy5TJs}.
We can  confirm that all agents were able to move around the area
without collisions, although light congestion occasionally occurred in
some places because several agents set the same destinations.

\subsubsection*{\bf Experiment 2 (Exp.2)}\mbox{}\\
Agents used the proposed method, the number of agent was 40, the
environment is Env.2. Sample video can be found at the url \url{https://youtu.be/do3pa22yKps}.
Although Env.~2 which does not meet the WFI condition and has a small
negative swing, they could smoothly move around there and complete the
MAPD instances.

\subsubsection*{\bf Experiment 3 (Exp.3)}\mbox{}\\
Forty agents adopting the proposed method performed the
MAPD instance in the environment Env.~3. Sample video can be found at \url{https://youtu.be/uXfgFJjgLIA}.
As we mentioned above, agents using proposed method could complete all
instance in such a over-crowded situation.

\subsubsection*{\bf Experiment 4 (Exp.4)}\mbox{}\\
Eight agents adopting the proposed method performed the
MAPD instance in the environment Env.~3 that has only ten nodes.
TSample video can be found at \url{https://youtu.be/0ap6Vq9JbBw}.
As we mentioned above, agents using proposed method could complete all
instance when $\abs{V_\main}-\abs{\AgentSet}=2$.

%%%%%%%%%%%%%%%%%%%%%%%%%%%%%%%%%%%%%%%%%%%%%%%%%%%%%%%%%%%%%%%%%%%%%%%%

\end{document}